\renewcommand{\phi}{\varphi}
\newif\iffull\fulltrue
\newcommand{\vdashv}{\vdash_v}
\newcommand{\vdashc}{\vdash_c}
\newcommand{\keywd}[1]{\mathtt{#1}}
\newcommand{\wcpo}[1]{$\omega$-cpo}
\newcommand{\wcpos}[1]{$\omega$-cpos}
\newcommand{\sq}[4]{\tensor*[^{#1}_{#2}]{\Diamond}{^{#3}_{#4}}}
\newcommand{\spn}[5]{{#1}\stackrel{#2}{\leftarrow}{#3}
         \stackrel{#4}{\rightarrow}{#5}}
\newcommand{\lspn}[5]{{#1}\stackrel{#2}{\longleftarrow}{#3}
         \stackrel{#4}{\longrightarrow}{#5}}
\newcommand{\supp}{\operatorname{supp}}
\newcommand{\mtrue}{\keywd{true}}
\newcommand{\mfalse}{\keywd{false}}
\newcommand{\ints}{\mathbb{Z}}
\newcommand{\bools}{\mathbb{B}}
\newcommand{\nats}{\mathbb{N}}
\newcommand{\inttype}{\keywd{int}}
\newcommand{\booltype}{\keywd{bool}}
\newcommand{\loctype}{\keywd{name}}
\newcommand{\dom}[1]{\mathrm{dom}({#1})}
\newcommand{\cod}[1]{\mathrm{cod}({#1})}
\newcommand{\pause}{\vspace{1.5ex}}
\newcommand{\gap}{\quad\quad}
\newcommand\Pscr{\mathcal{P}}
\newcommand\Iscr{\mathcal{I}}
\newcommand{\ba}{\begin{array}}
\newcommand{\ea}{\end{array}}
\newcommand{\squelch}[1]{}
\newcommand{\New}{\texttt{new}}
\newcommand{\vfix}[3]{\keywd{rec}\:{#1}\:{#2} = {#3}}
\newcommand{\vfun}[2]{\keywd{fun}\:{#1}.{#2}}
\newcommand{\letin}[2]{\keywd{let}\:{#1}\!\Leftarrow\!{#2}\:\keywd{in}\:}
\newcommand{\opletrec}[3]{\keywd{let\ rec}}
\newcommand{\myif}[3]{\keywd{if}\ #1\ \keywd{then}\ #2\
  \keywd{else}\ #3}
\newcommand{\lett}{{\keywd{let}\ }}
\newcommand{\fullonly}[1]{}
\newlength{\lruleZeroname}
\newcommand{\domL}[1]{\ensuremath{\mathrm{dom}}(#1)}
\newcommand{\domR}[1]{\ensuremath{\mathrm{dom}'}(#1)}
\newcommand{\lop}[1]{\ensuremath{\mathrm{lop}}(#1)}
\newcommand{\sem}[1]{\ensuremath{\llbracket {#1} \rrbracket}}
\newcommand{\cloc}{\ensuremath{\mathsf{l}}\xspace}
\newcommand\w{\ensuremath{\mathsf{w}}\xspace}
\newcommand\q{\ensuremath{\mathsf{q}}\xspace}
\newcommand{\world}{\ensuremath{\mathbf{W}}\xspace}
\newcommand{\semV}[1]{\ensuremath{\llceil {#1} \rrceil
}}
\newcommand\val{\ensuremath{\mathsf{v}}\xspace}
\newcommand\vval{\ensuremath{v}\xspace}
\newcommand\bval{\ensuremath{\mathsf{b}}\xspace}
\newcommand\bbval{\ensuremath{b}\xspace}
\newcommand\ival{\ensuremath{\mathsf{i}}\xspace}
\newcommand\iival{\ensuremath{i}\xspace}
\newcommand\cval{\ensuremath{\mathsf{c}}\xspace}
\newcommand\ccval{\ensuremath{c}\xspace}
\newcommand\Cscr{\ensuremath{\mathcal{C}}\xspace}
\newcommand\Std{\ensuremath{\mathit{Std}}\xspace}
\newcommand\ie{\emph{i.e.}\xspace}
\newcommand\eg{\emph{e.g.}\xspace}
\newcommand\id{id}
\newcommand{\svf}{SVF}
\newcommand{\svfs}{SVFs}
\newcommand{\todo}[1]{\textcolor{red}{#1} }
\begin{document}
%
%
\title[Proof-Relevant Logical Relations for Name Generation]{Proof-Relevant Logical Relations for Name Generation}
\author[Nick Benton et al.]{Nick Benton}	
\address{Facebook, London, UK}	
\email{nick.benton@gmail.com}  

\author[]{Martin Hofmann}	
\address{LMU, Munich, Germany}	
\email{hofmann@ifi.lmu.de}  

\author[]{Vivek Nigam}	
\address{UFPB, Jo\~ao Pessoa, Brazil \& fortiss, Munich, Germany}	
\email{vivek.nigam@gmail.com}  



\keywords{logical relations, parametricity, program transformation}
\subjclass{D.3.3: Programming Languages; 
           F.3.2: Logic and Meanings of Programs}
\titlecomment{A preliminary version of this work was presented at the 
11th International Conference on Typed Lambda Calculi and Applications, TLCA 2013, 
26-28 June 2013, Eindhoven, The Netherlands.}

\begin{abstract}
Pitts and Stark's $\nu$-calculus is a paradigmatic total language for
studying the problem of contextual equivalence in higher-order
languages with name generation. Models for the $\nu$-calculus that
validate basic equivalences concerning names may be constructed using
functor categories or nominal sets, with a dynamic allocation monad
used to model computations that may allocate fresh names. If recursion
is added to the language and one attempts to adapt the models from
(nominal) sets to (nominal) domains, however, the direct-style
construction of the allocation monad no longer works. This issue has
previously been addressed by using a monad that combines dynamic
allocation with continuations, at some cost to abstraction.

This paper presents a direct-style model of a $\nu$-calculus-like
language with recursion using the novel framework of \emph{proof-relevant
logical relations}, in which logical relations also contain objects (or
proofs) demonstrating the equivalence of (the semantic counterparts
of) programs. Apart from providing a fresh solution to an old problem,
this work provides an accessible setting in which to introduce the use of
proof-relevant logical relations, free of the additional complexities
associated with their use for more sophisticated languages.
\end{abstract}

\maketitle


\section*{Introduction}
\label{sec:intro}
Reasoning about contextual equivalence in higher-order languages that
feature dynamic allocation of names, references, objects or keys is
challenging. Pitts and Stark's $\nu$-calculus boils the problem down
to its purest form, being a total, simply-typed lambda calculus with
just names and booleans as base types, an operation $\New$ that generates fresh
names, and equality testing on names. The full equational theory of the
$\nu$-calculus is surprisingly complex and has been studied both
operationally and denotationally, using logical relations
\cite{Stark94thesis,pitts98high},
environmental bisimulations \cite{bentonkoutavas} and nominal game
semantics \cite{AbramskyEtal04lics,tzevelekos}.

Even before one considers the `exotic' equivalences that arise from
the (partial) encapsulation of names within closures, there are two
basic equivalences that hold for essentially all forms of
generativity:
\[
 \begin{array}{ll}
  (\letin{x}{\New} e) = e, \textrm{ provided $x$ is not free in $e$.}
   & \mbox{(Drop)}\\
  (\letin{x}{\New} \letin{y}{\New} e) = (\letin{y}{\New}
   \letin{x}{\New} e) & \mbox{(Swap)}.
 \end{array}
\]
The (Drop) equivalence says that removing the generation of unused
names preserves behaviour; this is sometimes called the `garbage
collection' rule. The (Swap) equivalence says that the order in which
names are generated is immaterial. These two equations also appear as
structural congruences for name restriction in the $\pi$-calculus.

Denotational models for the $\nu$-calculus validating (Drop) and
(Swap) may be constructed using (pullback-preserving) functors in
$\mathit{Set}^{\world}$, where $\world$ is the category of finite sets
and injections \cite{Stark94thesis}, or in
FM-sets~\cite{DBLP:journals/fac/GabbayP02}.  These models use a
dynamic allocation monad to interpret possibly-allocating
computations. One might expect that moving to $\mathit{Cpo}^{\world}$
or FM-cpos would allow such models to adapt straightforwardly to a
language with recursion, and indeed Shinwell, Pitts and Gabbay
originally proposed~\cite{shinwell03icfp} a dynamic allocation monad
over FM-cpos. However, it turned out that the underlying FM-cppo of
the proposed monad does not actually have least upper bounds for all
finitely-supported chains. A counter-example is given in Shinwell's
thesis~\cite[page 86]{shinwell04phd}.  To avoid the problem, Shinwell
and Pitts \cite{shinwell04phd,shinwell05tcs} moved to an
\emph{indirect-style} model, using a \emph{continuation
  monad}~\cite{pitts98high}: $(-)^{\top \top} \stackrel{def}{=} (-
\rightarrow 1_\bot) \rightarrow 1_\bot$ to interpret computations.  In
particular, one shows that two programs are equivalent by proving that
they co-terminate when supplied with the same (or equivalent)
continuations. The CPS approach was also adopted by Benton and
Leperchey \cite{DBLP:conf/tlca/BentonL05}, and by Bohr and Birkedal
\cite{bohrbirkedal}, for modelling languages with references.

In the context of our on-going research on the semantics of
effect-based program transformations
\cite{DBLP:conf/aplas/BentonKHB06}, we have been led to develop
\emph{proof-relevant} logical relations~\cite{benton14popl}. These
interpret types not merely as partial equivalence relations, as is
commonly done, but as a proof-relevant generalization thereof:
\emph{setoids}. A setoid is like a category all of whose morphisms are
isomorphisms (a groupoid) with the difference that no equations
between these morphisms are imposed. The objects of a setoid establish
that values inhabit semantic types, whilst its morphisms are
understood as explicit proofs of semantic equivalence. This paper
shows how we can use proof-relevant logical relations to give a
direct-style model of a language with name generation and recursion,
validating (Drop) and (Swap). Apart from providing a fresh approach to
an old problem, our aim in doing this is to provide a comparatively
accessible presentation of proof-relevant logical relations in a
simple setting, free of the extra complexities associated with
specialising them to abstract regions and effects \cite{benton14popl}. 

Although our model validates the two most basic equations for name
generation, it is -- like simple functor categories in the total case
-- still far from fully abstract. Many of the subtler contextual
equivalences of the $\nu$-calculus still hold in the presence of
recursion; one naturally wonders whether the more sophisticated
methods used to prove those equivalences carry over to the
proof-relevant setting. We will show one such method, Stark's
\emph{parametric functors}, which are a categorical version of Kripke
logical relations, does indeed generalize smoothly, and can be used to
establish a non-trivial equivalence involving encapsulation of fresh
names. Moreover, the proof-relevant version is naturally transitive,
which is, somewhat notoriously, not generally true of ordinary logical
relations.

Section~\ref{sec:lang} sketches the language with which we will be
working, and a naive `raw' domain-theoretic semantics for it. This
semantics does not validate interesting equivalences, but is
adequate. By constructing a realizability relation between it and the
more abstract semantics we subsequently introduce, we will be able to
show adequacy of the more abstract semantics.  In
Section~\ref{sec:setoids} we introduce our category of setoids; these
are predomains where there is a (possibly-empty) set of `proofs'
witnessing the equality of each pair of
elements. We then describe pullback-preserving
functors from the category of worlds $\world$ into the category of
setoids. Such functors will interpret types of our language in the
more abstract semantics, with morphisms between them interpreting
terms. The interesting construction here is that of a dynamic
allocation monad over the category of pullback-preserving
functors. Section~\ref{sec:prlr} shows how the abstract semantics is
defined and related to the more concrete
one. Section~\ref{sec:examples} then shows how the semantics may be
used to establish basic equivalences involving name generation. Section~\ref{sec:parametric} describes how proof-relevant parametric functors can validate a more subtle equivalence involving encapsulation of new names.

\section{Syntax and Semantics}
\label{sec:semantics}
\label{sec:lang}
We work with an entirely conventional CBV language, featuring recursive functions and base types that include names, equipped with equality testing and fresh name generation (here $+$ is just a representative operation on integers):
\begin{eqnarray*}
\tau & := & \inttype \mid \booltype \mid \loctype \mid \tau\to\tau'\\[2pt]
v & := & x \mid b \mid i \mid \vfix{f}{x}{e} \mid v + v' \mid v = v'\\[2pt]
e & := & v  \mid \New \mid \letin{x}{e}{e'} \mid v\,v'
 \mid \myif{v}{e}{e'}\\[2pt]
\Gamma & := & x_1:\tau_1,\ldots,x_n:\tau_n
\end{eqnarray*}
The expression $\vfix{f}{x}{e}$ stands for an anonymous function which satisfies the recursive equation $f(x)=e$ where both $x$ and $f$ may occur in $e$. In the special case where $f$ does not occur in $e$, the construct degenerates to function abstraction. We thus introduce the abbreviation: 
\[
\vfun{x}{e} \triangleq \vfix{f}{x}{e}\quad\mbox{where $f$ does not occur in $e$}.
\]
There are typing judgements for values,
$\Gamma \vdashv v:\tau$,
and computations, $\Gamma \vdashc e:\tau$,
defined in an unsurprising way; these are shown in Figure~\ref{fig:typing}. We will often elide the subscript on turnstiles.
\begin{figure*}[tp]
\label{fig:typing}
\begin{mathpar}
\infer{\Gamma,x:\tau\vdashv x:\tau}{}
\and
\infer{\Gamma\vdashv b:\booltype}{}
\and
\infer{\Gamma\vdashv i:\inttype}{}
\and
\inferrule{\Gamma,f:\tau\to\tau',x:\tau\vdashc e:\tau'}{\Gamma\vdashv\vfix{f}{x}{e}:\tau\to\tau'}
\\
\inferrule{\Gamma\vdashv v:\inttype\\ \Gamma\vdashv v':\inttype}{\Gamma\vdashv v+v' : \inttype}
\and
\inferrule{\Gamma\vdashv v:\tau \\ \Gamma\vdashv v':\tau\\ \tau\in\{\inttype,\loctype\}}{\Gamma\vdashv v=v' :\booltype}
\and
\inferrule{\Gamma\vdashv v:\tau}{\Gamma\vdashc v:\tau}
\and
\infer{\Gamma\vdashc \New : \loctype}{}
\and
\inferrule{\Gamma\vdashc e:\tau\\ \Gamma,x:\tau\vdashc e':\tau'}{\Gamma\vdashc \letin{x}{e}{e'} : \tau'}
\and
\inferrule{\Gamma\vdashv v: \tau\to\tau'\\ \Gamma\vdashv v':\tau}{\Gamma\vdashc v\,v' : \tau'}
\and
\inferrule{\Gamma\vdashv v:\booltype\\ \Gamma\vdashc e:\tau\\ \Gamma\vdashc e':\tau}{\Gamma\vdashc \myif{v}{e}{e'} : \tau}
\end{mathpar}
\caption{Typing rules for language with recursion and name generation.}
\end{figure*}

We define a simple-minded concrete denotational semantics $\semV{\cdot}$ for this language using predomains ($\omega$-cpos) and continuous maps. For types we take
\[\begin{array}{l}
\semV{\inttype}=\ints\qquad 
\semV{\booltype} = \bools \qquad
\semV{\loctype} = \nats\\[4pt]
\semV{\tau\to\tau'} = \semV{\tau}\to (\nats\to \nats\times\semV{\tau'})_\bot\\[4pt]
\semV{x_1:\tau_1,\ldots,x_n:\tau_n} = \semV{\tau_1}\times\cdots\times\semV{\tau_n}
\end{array}
\]
and there are then conventional clauses defining
\[
\begin{array}{l}
\semV{\Gamma\vdashv v:\tau} : \semV{\Gamma}\to\semV{\tau}\qquad\mbox{and}\qquad
\semV{\Gamma\vdashc e:\tau} : \semV{\Gamma}\to (\nats\to\nats\times\semV{\tau})_\bot
\end{array}
\]
Note that this semantics just uses naturals to interpret names, and a state monad over names to interpret possibly-allocating computations. For allocation we take
\[\semV{\Gamma\vdashc\New:\loctype}(\eta) = [\lambda n. (n+1,n)]\] 
returning the next free name and incrementing the name supply. This semantics validates no interesting equivalences involving names, but is adequate for the obvious operational semantics. Our more abstract semantics, $\sem{\cdot}$, will be related to $\semV{\cdot}$ in order to establish \emph{its} adequacy.

\section{Setoids}
\label{sec:setoids}
We define the \emph{category of setoids}, $\Std$, to be the exact
completion of the category of predomains, see
\cite{DBLP:conf/mfps/CarboniFS87,DBLP:conf/lics/BirkedalCRS98}. We
give here an elementary description of this category using the
language of dependent types.  A \emph{setoid} $A$ consists of a
predomain $|A|$ and for any two $x,y\in |A|$ a set $A(x,y)$ of
``proofs'' (that $x$ and $y$ are equal). The set of triples $X = \{(x,y,p)
\mid p\in A(x,y)\}$ must itself be a predomain, \ie, there has to be an order relation $\leq$ such that $(X,\leq)$ is a predomain. The first and
second projections out of the set of triples  must be continuous. Furthermore, there are
continuous functions $r_A:\Pi x\in |A|.A(x,x)$ and $s_A:\Pi x,y\in
|A|.A(x,y)\rightarrow A(y,x)$ and $t_A:\Pi x,y,z.A(x,y)\times
A(y,z)\rightarrow A(x,z)$, witnessing reflexivity, symmetry and
transitivity; note that, unlike the case of \emph{groupoids}, no equations involving $r$, $s$ and $t$ are imposed.

We should explain what continuity of a dependent function like
$t(-,-)$ is: if $(x_i)_i$ and $(y_i)_i$ and $(z_i)_i$ are ascending
chains in $A$ with suprema $x,y,z$ and $p_i\in A(x_i,y_i)$ and $q_i\in
A(y_i,z_i)$ are proofs such that $(x_i,y_i,p_i)_i$ and $(y_i,z_i,q_i)_i$
are ascending chains, too, with suprema $(x,y,p)$ and $(y,z,q)$ then
$(x_i,z_i,t(p_i,q_i))$ is an ascending chain of proofs (by
monotonicity of $t(-,-)$) and its supremum is $(x,z,t(p,q))$.
Formally, such dependent functions can be reduced to non-dependent ones using
pullbacks, that is $t$ would be a function defined on the pullback of
the second and first projections from $\{(x,y,p)\mid p\in A(x,y)\}$ to
$|A|$, but we find the dependent notation to be much more readable.
If $p\in A(x,y)$ we may write $p:x\sim y$ or simply $x\sim y$. We also
omit $|-|$ wherever appropriate. We remark that ``setoids'' also appear in
constructive mathematics and formal proof, see \eg,
\cite{DBLP:journals/jfp/BartheCP03}, but the proof-relevant nature of
equality proofs is not exploited there and everything is based on sets
(types) rather than predomains. 
A morphism from setoid $A$ to setoid $B$ is an equivalence class of  pairs $f=(f_0,f_1)$ of
continuous functions where $f_0:|A|\rightarrow |B|$ and $f_1:\Pi
x,y\in|A|.A(x,y)\rightarrow B(f_0(x),f_0(y))$. Two such pairs
$f,g:A\rightarrow B$ are \emph{identified} if there exists a
continuous function $\mu:\Pi a\in|A|.B(f_0(a),g_0(a))$. 

The following is folklore, see also \cite{DBLP:conf/lics/BirkedalCRS98}. 
\begin{prop}
The category of setoids is cartesian closed. Cartesian product is given pointwise.  The function space $A\Rightarrow B$ 
of setoids $A$ and $B$ is given as follows: the underlying predomain $|A\Rightarrow B|$ comprises pairs $(f_0,f_1)$ which are \emph{representatives} of morphisms from $A$ to $B$. That is, $f_0:|A|\rightarrow |B|$ and $f_1:\Pi
x,y\in|A|.A(x,y)\rightarrow B(f_0(x),f_0(y))$ are continuous functions with the pointwise ordering. The proof set $(A\Rightarrow B)((f_0,f_1), (f_0',f_1'))$ comprises witnesses of the equality of $(f_0,f_1)$ and $(f_0',f_1')$ qua morphisms, \ie, continuous functions $\mu:\Pi a\in|A|.B(f_0(a),f_0'(a))$. 
\end{prop}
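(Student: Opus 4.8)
The plan is to supply the three pieces that together constitute cartesian closure — a terminal object, binary products, and exponentials — taking the category $\Std$, together with its composition, identities, and the identification of representatives, as already given. The terminal object is the one-point setoid $\mathbf 1$ with $|\mathbf 1|$ a singleton predomain and a singleton proof set; every setoid admits the constant pair into it, and any two such pairs are identified by the unique (constant) $\mu$, so $\mathbf 1$ is terminal. For products I would take $|A\times B|=|A|\times|B|$ with $(A\times B)((x,y),(x',y'))=A(x,x')\times B(y,y')$. The set of triples of $A\times B$ is then the product of the triple-predomains of $A$ and of $B$, hence a predomain with continuous projections, and $r,s,t$ are obtained componentwise from those of $A$ and $B$, hence continuous. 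Projections and pairing are componentwise; pairing respects the identification of representatives because the two witnessing $\mu$'s combine into one, and the universal property is then routine. The real work is the exponential.

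The first task for the exponential is to confirm that $A\Rightarrow B$ as described is genuinely a setoid. A point of $|A\Rightarrow B|$ is a representative $(f_0,f_1)$ in which the type of $f_1$ depends on $f_0$, so this is a dependent sum rather than a plain product of function predomains. To see it is a predomain under the pointwise order I would take a chain $(f_0^n,f_1^n)_n$, form $f_0=\sup_n f_0^n$ in the predomain $|A|\to|B|$, and then, for each $(x,y,p)$, observe that the triples $(f_0^n(x),f_0^n(y),f_1^n(x,y,p))$ form an ascending chain in the triple-predomain of $B$ whose supremum — by continuity of the projections — has first two components $f_0(x)$ and $f_0(y)$; its third component defines $f_1(x,y,p)$, and continuity of the resulting dependent $f_1$ is exactly the pullback condition spelled out in the text. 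The proof set of continuous $\mu:\Pi a\in|A|.\,B(f_0(a),f_0'(a))$ is likewise a predomain pointwise, and reflexivity, symmetry and transitivity are given by postcomposing pointwise with $r_B$, $s_B$ and $t_B$, which are continuous because $r_B,s_B,t_B$ are. This verification, together with the analogous closure-under-suprema argument for the $\mu$-triples, is where I expect the main obstacle to lie: each step must be carried out in the dependent setting, reducing every dependent function to its non-dependent pullback form in order to justify continuity.

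With $A\Rightarrow B$ in hand I would define evaluation $\mathrm{ev}:(A\Rightarrow B)\times A\to B$ by $\mathrm{ev}_0((f_0,f_1),a)=f_0(a)$ on objects and, on a proof $(\mu,q)$ relating $((f_0,f_1),a)$ to $((f_0',f_1'),a')$, by $\mathrm{ev}_1(\mu,q)=t_B(\mu(a),f_1'(a,a',q))$, composing $\mu(a):f_0(a)\sim f_0'(a)$ with $f_1'(a,a',q):f_0'(a)\sim f_0'(a')$; continuity follows from that of $t_B$ and the evident continuity of application. Currying of a morphism $h:C\times A\to B$ sends $c$ to the representative whose object part is $a\mapsto h_0(c,a)$ and whose proof part applies $h_1$ to the product-proof $(r_C(c),q)$, while $(\Lambda h)_1$ sends $w\in C(c,c')$ to the witness $a\mapsto h_1(w,r_A(a))$ in $B(h_0(c,a),h_0(c',a))$. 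One checks that both components are continuous, and that $\Lambda$ is well defined on identification classes by transporting the witnessing $\mu$ of $h\sim h'$ through the very same formulas.

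Finally I would verify the universal property. On object components $\mathrm{ev}\circ(\Lambda h\times \mathrm{id}_A)$ equals $h_0$ on the nose, so the two morphisms are identified by the reflexivity witness $\mu(c,a)=r_B(h_0(c,a))$, giving $\mathrm{ev}\circ(\Lambda h\times\mathrm{id}_A)=h$ in $\Std$; conversely, for any $k:C\to(A\Rightarrow B)$ with $\mathrm{ev}\circ(k\times\mathrm{id}_A)=h$, the identification between $h$ and this composite transports to a continuous witness that $k\sim\Lambda h$, establishing uniqueness. Naturality of $\Lambda$ in $C$ is then a direct componentwise calculation. As already noted, none of these steps is conceptually deep once the setoid structure of $A\Rightarrow B$ is in place; the continuity bookkeeping for the dependent functions, handled via the pullback reformulation, is the one place demanding real care.
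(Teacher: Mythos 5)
Your proposal is correct and follows essentially the same route as the paper: the same pointwise product, the same evaluation map $((f_0,f_1),a)\mapsto f_0(a)$, and the same currying formulas $\lambda(h)_0(c)=(a\mapsto h_0(c,a),\ (a,a',p)\mapsto h_1((c,a),(c,a'),(r(c),p)))$ and $\lambda(h)_1(p)=(a\mapsto h_1((c,a),(c',a),(p,r(a))))$. The paper leaves the remaining verifications (the predomain structure on $|A\Rightarrow B|$, the proof component of evaluation, well-definedness on identification classes, and the universal property) to the reader; you have filled these in correctly, e.g.\ your $\mathrm{ev}_1(\mu,q)=t_B(\mu(a),f_1'(a,a',q))$ is a valid choice of the omitted proof part.
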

\begin{proof}
The evaluation morphism $(A\Rightarrow B)\times A\longrightarrow B$
sends $(f_0,f_1)$ and $a$ to $f_0(a)$. If $h:C\times A\longrightarrow
B$ is a morphism represented by $(h_0,h_1)$ then the morphism
$\lambda(h):C\longrightarrow A\Rightarrow B$ may be represented by
$(\lambda(h)_0,\lambda(h)_1)$ where $\lambda(h)_0(c)=(f_0,f_1)$ and
$f_0(a)=h_0(c,a)$ and
$f_1(a,a',p)=h_1((c,a),(c,a'),(r(c),p))$. Likewise,
$\lambda(h)_1(c,c',p)=\mu$ where $\mu(a)=h_1((c,a),(c',a),(p,r(a)))$.
The remaining verifications are left to the reader.
\end{proof}
\begin{defi}
A setoid $D$ is \emph{pointed} if $|D|$ has a least element $\bot$ and
such that there is also a least proof $\bot\in D(\bot,\bot)$. 
If $D$ is pointed we write $\bot$ for the obvious global element
$1\rightarrow D$ returning $\bot$. A morphism $f:D\rightarrow D'$ with
$D,D'$ both pointed is \emph{strict} if $f\bot=\bot$.
\end{defi}
\begin{thm}\label{tfixit}
Let $D$ be a pointed setoid.
Then there
is a morphism of setoids $Y:[D\Rightarrow D]\rightarrow D$
satisfying the following equations (written using $\lambda$-calculus
notation, which is meaningful in cartesian closed categories).
\[
\begin{array}{rclr}
f(Y(f)) &=& Y(f) & \mbox{(Fixpoint)}\\
f(Y(g\circ f)) &=& Y(f\circ g)& \mbox{(Dinaturality)}\\
f(Y(g)) &=& Y(h)\mbox{ if $f$ is strict and $fg=hf$} & \mbox{(Uniformity)}\\
Y(f^n) &=& Y(f) & \mbox{(Power)}\\
Y(\lambda x.f(x,x)) &=& Y(\lambda x.Y(\lambda y.f(x,y)))& \mbox{(Diagonal)}\\
Y(\lambda \vec x.\vec t(\vec x)) &=& \langle Y(s),\dots, Y(s)\rangle & \mbox{(Amalgamation)}\\
\multicolumn{3}{l}{\mbox{when }t_i(y,\dots,y)=s(y)$ for $i=1,\dots,n$ and $\vec t=\langle t_1,\dots,t_n\rangle}
\end{array} 
\]
\end{thm}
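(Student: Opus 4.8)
The plan is to exhibit a representative $(Y_0,Y_1)$ of the morphism $Y$ that is built on top of the ordinary least-fixpoint operator, and then to verify the six equations at the level of morphisms. For the underlying map I would take $Y_0(f_0,f_1)=\fix(f_0)=\bigsqcup_n f_0^n(\bot)$, the Kleene fixpoint in the pointed predomain $|D|$. This depends only on $f_0$, is continuous in it by the standard argument, and since the order on representatives of $[D\Rightarrow D]$ is pointwise, $Y_0$ is continuous on $|[D\Rightarrow D]|$.

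The interesting, genuinely proof-relevant part is $Y_1$. Write $X_D=\{(x,y,p)\mid p\in D(x,y)\}$ for the predomain of triples; pointedness of $D$ makes $X_D$ a pointed predomain with bottom $(\bot,\bot,\bot)$. Given a proof $\mu\in[D\Rightarrow D]((f_0,f_1),(g_0,g_1))$, that is a continuous $\mu:\Pi y.\,D(f_0 y,g_0 y)$, I would assemble $f_1$, $\mu$ and transitivity $t_D$ into a continuous endomap $\Phi$ of $X_D$, namely $\Phi(x,y,p)=(f_0 x,\,g_0 y,\,t_D(f_1(x,y,p),\mu(y)))$, and set $Y_1(f,g,\mu)=\pi_3(\fix \Phi)$. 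Since $\pi_1\Phi^n(\bot,\bot,\bot)=f_0^n(\bot)$ and $\pi_2\Phi^n(\bot,\bot,\bot)=g_0^n(\bot)$, continuity of the projections gives $\fix\Phi=(Y_0 f,Y_0 g,Y_1(f,g,\mu))$, so $Y_1(f,g,\mu)$ is indeed a proof in $D(Y_0 f,Y_0 g)$. That $(Y_0,Y_1)$ is a bona fide setoid morphism then reduces to continuity of $Y_1$ on the triple predomain of $[D\Rightarrow D]$, which I would obtain from continuity of $(f,g,\mu)\mapsto\Phi$ into the predomain of continuous endomaps of $X_D$ together with continuity of $\fix$, the only subtlety being the usual interchange of the two suprema.

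For the equations I would exploit the fact that two morphisms are identified as soon as their underlying maps agree up to a continuous family of proofs: the proof components $f_1,Y_1$ are irrelevant to equality. Hence Fixpoint, Power, Diagonal and Dinaturality reduce to the classical on-the-nose identities for $\fix$, namely $f_0(\fix f_0)=\fix f_0$, $\fix(f_0^n)=\fix f_0$, the Bek\'ic diagonal rule, and $f_0(\fix(g_0 f_0))=\fix(f_0 g_0)$, each of which holds strictly for the least fixpoint. The witnessing family can therefore be taken to be reflexivity $r_D$, which is continuous and hence a legitimate witness, so no proof-relevant work is needed for these four.

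The main obstacle is Uniformity, and to a lesser extent Amalgamation, precisely because there the hypotheses $fg=hf$ and $t_i(y,\dots,y)=s(y)$ are only equalities of morphisms: the relevant underlying maps $f_0(\fix g_0)$ and $\fix h_0$ need not coincide on the nose, so reflexivity no longer suffices and a real proof must be produced. Here I would rerun the fixpoint-on-triples device in the twisted predomain $X^f=\{(a,b,p)\mid p\in D(f_0 a,b)\}$ (a pullback of $X_D$, hence again a predomain), using the endomap $(a,b,p)\mapsto(g_0 a,\,h_0 b,\,t_D(\mu(a),h_1(f_0 a,b,p)))$ built from the proof $\mu$ witnessing $fg=hf$ and from $h_1$; its least fixpoint has as third component a proof $f_0(\fix g_0)\sim\fix h_0$, continuous in all the data. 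The delicate point, which I expect to require the most care, is seeding this iteration: one must use strictness of $f$ (the proof that $f_0\bot\sim\bot$, ideally on the nose) to supply a bottom element of $X^f$ from which the iterates form an ascending chain, and this is exactly where the precise reading of ``strict'' and of ``least proof'' in the definition of pointedness is needed. Amalgamation I would treat analogously, running the construction in the $n$-fold product triple predomain and using $t_i(y,\dots,y)=s(y)$ to keep the iterates on the diagonal, so that the tupled fixpoint is provably the diagonal of $\fix s$.
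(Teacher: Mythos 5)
Your proposal is correct and follows essentially the same route as the paper: the underlying map is the Kleene iterate $\bigsqcup_n f_0^n(\bot)$, the proof component is obtained by iterating the transitivity-composite of $f_1$ and the given proof $\mu$ starting from the least proof and passing to the supremum of the resulting chain of triples (the paper writes this as an explicit induction $p_{i+1}=t(f_1(p_i),q(d_i'))$ rather than as a fixpoint on the triple predomain, but it is the same construction), the on-the-nose cpo identities dispose of Fixpoint, Power, Diagonal and Dinaturality, and Uniformity/Amalgamation are handled by building a bespoke chain of proofs because their premises hold only up to $\sim$. The only cosmetic difference is that the paper works out Amalgamation as its representative hard case while you work out Uniformity; both follow the identical scheme.
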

\begin{proof}
To define the morphism $Y$ suppose we are given $f=(f_0,f_1)\in |D\Rightarrow D|$. For each $i\in \mathbb{N}$ we define $d_i\in |D|$ 
  by $d_0=\bot$ and $d_{i+1}=f_0(d_i)$. We then put $Y(f)=\sup_i d_i$. 

Now suppose that $f'=(f_0',f_1')\in |D\Rightarrow D|$ and $q:f\sim
f'$, i.e., $q:\Pi d.D(f_0(d),f_0'(d))$. Let $d_i'$ be defined
analogously to $d_i$ so that $Y(f')=\sup_i d_i'$. By induction on $i$
we define proofs $p_i:d_i\sim d_i'$. We put $p_0=\bot$ (the least
proof) and, inductively, $p_{i+1}=t(f_1(p_i),q(d_i'))$
(transitivity). Notice that $f_1(p_i):d_{i+1}\sim f_0(d_i')$ and
$q(d_i'):f_0(d_i')\sim d_{i+1}'$. Now let $(d,d',p)$ be the supremum of the chain 
$(d_i,d_i',p_i)$. By continuity of the projections we have that $d=Y(f)$ and $d'=Y(f')$ and thus $p:Y(f)\sim Y(f')$. The passage from $q$ to $p$ witnesses that $Y$ is indeed a (representative of a)  morphism.

Equations ``Diagonal'' and ``Dinaturality'' follow directly from the
validity of these properties for the least fixpoint combinator for
cpos. For the sake of completeness we prove the second one. Assume
$f,g\in |D\Rightarrow D|$ and let $d_i=(f_0g_0)^i(\bot)$ and
$e_i=(g_0f_0)^i(\bot)$. We have $d_i\leq f_0(e_i)$ and $f_0(e_{i}))\leq
  d_{i+1}$. It follows that $Y(fg)$ and $Y(gf)$ are actually
  equal. Equation ``Fixpoint'' is a direct consequence of dinaturality
  (take $g=\textrm{id}$).

Amalgamation and uniformity are also valid for the least fixpoint
combinator, but cannot be directly inherited since the equational
premises only holds up to $\sim$. As a representative example we show
amalgamation. So assume elements $t_i\in|D^n\Rightarrow D|$ and
$s\in|D\Rightarrow D|$ and proofs $p_k:\Pi
d.D((t_k)_0(d,\dots,d),s(d))$. Consider $d_i=\vec
t_0^i(\bot,\dots,\bot)$ and $e_i=s_0^i(\bot)$. By induction on $i$ and
using the $p_k$ we construct proofs $d_i\sim (e_i,\dots,e_i)$. The
desired proof of $Y(\vec{t})\sim (Y(s),\dots, Y(s))$ is obtained as
the supremum of these proofs as in the definition of the witness that
$Y$ is a morphism above.

Equation ``Power'', finally, can be deduced from amalgamation and dinaturality or alternatively inherited directly from the least fixpoint combinator. 
\end{proof}
The above equational axioms for the fixpoint combinator are taken from Simpson and Plotkin
\cite{DBLP:conf/lics/SimpsonP00}, who show that they imply certain
completeness properties. In particular, it follows that the category
of setoids is an ``iteration theory'' in the sense of Bloom and \'{E}sik
\cite{DBLP:series/eatcs/BloomE93}. For us they are important since the
category of setoids is not cpo-enriched in any reasonable way, so that
the usual order-theoretic characterisation of $Y$ is not
available. Concretely, the equations help, for example, to justify
various loop optimisations when loops are expressed using the fixpoint
combinator. 
\begin{defi}
A setoid $D$ is \emph{discrete} if for all $x,y\in D$ we have $|D(x,y)|\leq 1$ and $|D(x,y)|=1\iff x=y$. 
\end{defi}
Thus, in a discrete setoid proof-relevant equality and actual equality coincide and moreover any two equality proofs are actually equal (i.e.~we have proof irrelevance). 

\section{Finite sets and injections}
\label{sec:pullback}
\emph{Pullback squares} are a central notion in our framework. As it will become
clear later, they are the ``proof-relevant'' component of logical relations.
Recall that a morphism $u$ in a category is a monomorphism if $ux=ux'$
implies $x=x'$ for all morphisms $x,x'$. Two morphisms with common co-domain are called a co-span and two morphisms with common domain are called span. A commuting square $xu=x'u'$
of morphisms is a \emph{pullback} if whenever $xv=x'v'$ there is unique $t$
such that $v=ut$ and $v'=u't$. This can be visualized as follows:
\begin{displaymath}
\vcenter{\xymatrix@C=1pc@R=0.5pc{ 
& \overline{\w}\\
\w\ar[ru]^x & & \w'\ar[lu]_{x'}\\
& \underline{\w}\ar[ru]_{u'} \ar[lu]^u\\
\\
& \cdot\ar@/^/[luuu]^v \ar@/_/[ruuu]_{v'} \ar@{.>}[uu]_t
}}
\end{displaymath}
We write $\sq{x}{u}{x'}{u'}$ or
$\w\sq{x}{u}{x'}{u'}\w'$ (when $\w^( {}' {}^)=\dom{x^( {}' {}^)}$) for such a
pullback square. We call the common codomain of $x$ and $x'$ the
\emph{apex} of the pullback, written $\overline{\w}$, while
the common domain of $u,u'$ is the \emph{low point} of the
square, written $\underline{\w}$. A pullback square $\w\sq{x}{u}{x'}{u'}\w'$ with apex
$\overline{\w}$ is \emph{minimal} if whenever there is another
pullback $\w\sq{x_1}{u}{x_1'}{u'}\w'$ over the same span and with apex
$\overline{\w_1}$, then there is a unique morphism $t:\overline{\w}
\to \overline{\w_1}$ such that $x_1 = tx$ and $x_1' = tx'$. 

A category has pullbacks if every co-span can be completed to a pullback, which is necessarily unique up to isomorphism.

\begin{defi}
  A \emph{category of worlds}, $\Cscr$, is a category with pullbacks where any span $u : \underline\w \rightarrow \w ,u' :\underline \w \rightarrow \w'$ can be completed to a minimal pullback square. Furthermore, there is a subcategory $\Iscr$ of $\Cscr$ full on objects which is a poset, \ie, $|\Iscr(X,Y)| \leq 1$. The morphisms in $\Iscr$ are called inclusions. Moreover, any morphism $u$ in $\Cscr$ can be factored as $i_1;u_1$ and as $u_2;i_2$ where $i_1,i_2$ are inclusions and $u_1,u_2$ are isomorphisms. 
\end{defi}

\begin{prop}
  In a category of worlds all morphisms are monomorphisms and if $\w\sq{x}{u}{x'}{u'}\w'$ with apex $\overline{\w}$ is a minimal pullback then the morphisms $x$ and $x'$ are \emph{jointly epic}, i.e. for any $f,g : \overline{\w}\to \w_1$, if $fx = gx$ and $fx'=gx'$, then $f=g$. 
\end{prop}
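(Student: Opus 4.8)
The plan is to treat the two assertions separately. For the monomorphism claim I will reduce to the case of inclusions using the factorisation axiom, and for the joint-epi claim I will exploit the \emph{initiality} contained in minimality, building an auxiliary pullback out of the two competing maps $f,g$ and showing that it is itself a pullback over the same span. I expect the genuinely delicate point to lie in the monomorphism claim — specifically, in showing that an \emph{inclusion} is monic — since this is where the poset structure of $\Iscr$ must be used in an essential way; the joint-epi part, by contrast, is a clean diagram chase once the right pullback is written down.

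For the first assertion, I would start from a factorisation $u = u_2; i_2$ of an arbitrary morphism, with $u_2$ an isomorphism and $i_2$ an inclusion. Isomorphisms are monic and composites of monics are monic, so everything reduces to showing that inclusions are monic. Given an inclusion $i:X\to Y$ and $f,g:Z\to X$ with $if=ig$, I would factor $f=\alpha_f; c_f$ and $g=\alpha_g; c_g$ as isomorphism-followed-by-inclusion. Since $\Iscr$ is a subcategory, $c_f; i$ and $c_g; i$ are again inclusions, so that $\alpha_f;(c_f; i)$ and $\alpha_g;(c_g; i)$ are two iso--inclusion factorisations of the \emph{single} morphism $if=ig$. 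The crux is then the essential uniqueness of such factorisations: the isomorphism parts must agree and the inclusion parts must share a common domain $M$. Granting this, $c_f$ and $c_g$ are both inclusions $M\to X$, hence equal by thinness of $\Iscr$ (as $|\Iscr(M,X)|\le 1$), so $f=g$. The real work — and the main obstacle — is precisely this uniqueness, which amounts to showing that an isomorphism $\theta$ with $c=\theta; c'$ for inclusions $c,c'$ into a common object must be an identity; this is exactly the point at which the skeletality/antisymmetry of the poset $\Iscr$ has to be invoked, and it is where an honest write-up will have to do careful bookkeeping rather than a one-line argument.

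For the joint-epi assertion, fix a minimal pullback $\w\sq{x}{u}{x'}{u'}\w'$ with apex $\overline{\w}$ and suppose $f,g:\overline{\w}\to\w_1$ satisfy $fx=gx$ and $fx'=gx'$. I would first form the pullback $(E,e_1,e_2)$ of the cospan given by $f$ and $g$, so that $fe_1=ge_2$ and $(e_1,e_2)$ is jointly monic. Because $fx=gx$ and $fx'=gx'$, the universal property of $E$ yields maps $a:\w\to E$ with $e_1a=e_2a=x$ and $b:\w'\to E$ with $e_1b=e_2b=x'$. The key step is to verify that $\w\sq{a}{u}{b}{u'}\w'$ is again a pullback over the span $(u,u')$, i.e.\ that $(\underline{\w},u,u')$ is the pullback of the cospan $(a,b)$: the square commutes because $au$ and $bu'$ agree after both $e_1$ and $e_2$ (joint monicity), and comparing the pullback of $(a,b)$ with $\underline{\w}$ via the universal property of the original pullback — post-composing an $(a,b)$-cone with $e_1$ turns it into an $(x,x')$-cone — exhibits the two as canonically isomorphic.

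Finally, minimality of $\overline{\w}$ supplies a unique $t:\overline{\w}\to E$ with $tx=a$ and $tx'=b$. Then $e_1t$ and $e_2t$ are self-maps of $\overline{\w}$ fixing both $x$ and $x'$, so each is the identity by the uniqueness built into minimality (applied with $\overline{\w}$ itself as the comparison pullback over the span). Hence $f=fe_1t=ge_2t=g$, using $fe_1=ge_2$, which is the desired joint epimorphism property. It is worth noting that this second argument uses only the existence of pullbacks, the joint monicity of pullback projections, and minimality; in particular it does not depend on the first assertion, so the two halves of the proposition can be proved independently.
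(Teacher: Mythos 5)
Your treatment of the joint-epi claim is correct, but your monomorphism argument has a genuine gap, and it is precisely at the point you yourself flag as ``the real work''. You need that an iso--inclusion factorisation is essentially unique, i.e.\ that if $c:M\to Z$ and $c':M'\to Z$ are inclusions and $\theta:M\to M'$ is an isomorphism with $c=c'\theta$, then $M=M'$ and $\theta=\id$. The axioms of a category of worlds do not obviously yield this: the poset condition on $\Iscr$ only says there is at most one inclusion between any two given objects (plus antisymmetry when inclusions exist in both directions), and here there is no inclusion between $M$ and $M'$ to apply it to. The natural attempts to derive the uniqueness (e.g.\ pulling back the co-span $c,c'$) end up needing inclusions to be monic, which is what you are trying to prove, so the reduction is circular as it stands. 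The paper sidesteps all of this: to see that $u:\w\to\w'$ is monic, complete the span $(u,u)$ to a pullback $\w'\sq{x}{u}{x'}{u}\w'$; if $ua=ub=:h$ then $xh=x'h$, and the \emph{uniqueness} of the mediating morphism in the pullback property forces $a=b$. This uses only the existence of pullbacks --- neither the factorisation axiom nor the structure of $\Iscr$ enters --- so the monomorphism claim is in fact the easy half, not the delicate one.

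Your joint-epi argument, by contrast, is correct and genuinely different from the paper's. The paper shows directly that $\w\sq{fx}{u}{fx'}{u'}\w'$ is again a pullback over the span $(u,u')$ (this step uses that $f$ and $g$ are monomorphisms, i.e.\ the first half of the proposition) and then invokes minimality to identify $f$, $g$ and the unique comparison map $k$. You instead pull back the co-span $(f,g)$ to get $(E,e_1,e_2)$, lift $x$ and $x'$ to $a,b$ landing in $E$, check that $\w\sq{a}{u}{b}{u'}\w'$ is a pullback over the same span, and use minimality twice: once to get $t:\overline{\w}\to E$ and once to force $e_1t=e_2t=\id$. This is slightly longer but buys independence: as you note, it does not rely on the monomorphism claim at all (joint monicity of the pullback projections $e_1,e_2$ substitutes for it), so the two halves decouple. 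Either route is fine for this half; just replace the monomorphism argument with the pullback-of-$(u,u)$ one.
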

\begin{proof}
First we show that any morphism $u : \w \rightarrow \w'$ is a monomorphism. Let $\w'\sq{x}{u}{x'}{u}\w'$ be a completion of the span $u,u$ to a (minimal) pullback. If $ua = ub =: h$, then $xh = x'h$. So, the pullback property furnishes a unique map $c$ such that $uc = h$. Thus $c = a = b$, so $u$ is a monomorphism.

Now suppose that $\w\sq{x}{u}{x'}{u'}\w'$ is a minimal pullback and $fx = gx =: h$ and $fx' = gx'=: h'$. Then we claim that $\w'\sq{h}{u}{h'}{u'}\w'$ is a pullback: if $ht = h't'$, then since $f,g$ are monomorphisms by the above, we have $xt = x't'$, so we can appeal to the pullback property of the original square.

Minimality of $\w'\sq{x}{u}{x'}{u}\w'$ furnishes a unique map $k$ such that $h = kx$ and $h'= kx'$. But since $f$ and $g$ also have that property ($h = fx$ and $h' = fx'$ and similarly for $g$), we conclude $f = g = k$. 
\end{proof}

\newcommand\Set{\ensuremath{\mathbf{S}}\xspace}

\begin{prop} 
   The {category} $\world$ with 
  finite sets of natural numbers as objects and injective functions for morphisms and inclusions for the subcategory of inclusion ($\Iscr$) is a category of worlds.
\end{prop}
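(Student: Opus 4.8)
The plan is to verify, one clause at a time, the four requirements in the definition of a category of worlds, in each case reducing to the standard behaviour of sets and injective functions and exploiting the fact that $\mathbb{N}$ is infinite to manufacture fresh elements whenever a new object must be built.

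First I would establish that $\world$ has pullbacks. Given a cospan $\w\stackrel{x}{\to}\overline{\w}\stackrel{x'}{\leftarrow}\w'$ of injections, I would take the low point to be $\underline{\w}=x(\w)\cap x'(\w')$, which is a subset of $\overline{\w}$ and therefore again a finite set of naturals, with legs $u,u'$ the restrictions of the partial inverses $x^{-1},x'^{-1}$ to $\underline{\w}$. The square commutes by construction and its universal property is exactly that of the usual pullback of injections in $\mathbf{Set}$: any competing cone factors uniquely through $\underline{\w}$ because $x,x'$ are injective. Since the pullback object is here literally a set of naturals, no choice of representative is needed.

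Next I would complete spans to minimal pullbacks using pushouts. For a span $\w\stackrel{u}{\leftarrow}\underline{\w}\stackrel{u'}{\to}\w'$ of injections I would form the amalgamated union of $\w$ and $\w'$ along $\underline{\w}$, realising this finite set as an object $\overline{\w}$ of $\world$ by choosing fresh naturals for the part contributed by $\w'\setminus u'(\underline{\w})$, together with injections $x,x'$ satisfying $xu=x'u'$, $x(\w)\cup x'(\w')=\overline{\w}$ and $x(\w)\cap x'(\w')=x(u(\underline{\w}))$. I would then check that $\w\sq{x}{u}{x'}{u'}\w'$ is a pullback---an element lying in both images lies in the glued part by the last equation, hence comes from $\underline{\w}$---and that it is minimal: for any competing pullback $\w\sq{x_1}{u}{x_1'}{u'}\w'$ over the same span, the universal property of the pushout furnishes a unique $t$ with $tx=x_1$ and $tx'=x_1'$. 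The genuinely delicate point is to show $t$ is injective, so that it is a morphism of $\world$: writing $p,q\in\overline{\w}$ each as $x(a)$ or $x'(a')$, the two ``same-image'' cases follow from injectivity of $x_1,x_1'$, while in the mixed case $x_1(a)=x_1'(a')$ together with the pullback property of the competing square produces $z\in\underline{\w}$ with $u(z)=a$ and $u'(z)=a'$, whence $p=x(u(z))=x'(u'(z))=q$.

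Finally I would dispatch inclusions and factorisation. Taking $\Iscr$ to consist of the genuine subset inclusions $X\hookrightarrow Y$ for $X\subseteq Y$, this is a subcategory containing every object with at most one morphism between any pair, hence a poset full on objects. An injection $u:\w\to\w'$ factors as $u_2;i_2$ by corestriction to its image ($u_2:\w\to u(\w)$ a bijection, $i_2:u(\w)\hookrightarrow\w'$ an inclusion), and as $i_1;u_1$ by enlarging $\w$ to a set $Z\supseteq\w$ of naturals with $|Z\setminus\w|=|\w'\setminus u(\w)|$ and extending $u$ to a bijection $u_1:Z\to\w'$. I expect the main obstacle to be the middle step: simultaneously verifying that the pushout square is a pullback and that the comparison map witnessing minimality stays within $\world$, since this is the only place where the combinatorics of images genuinely intervenes.
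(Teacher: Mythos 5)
Your proof is correct and follows essentially the same route as the paper: pullbacks of cospans via the intersection of images with the partial inverses as legs, completion of spans via the amalgamated union realised with fresh naturals, and the evident image/corestriction factorisations. You supply more detail than the paper does on the two points it labels straightforward or leaves implicit---in particular the verification that the comparison map $t$ witnessing minimality is injective, which correctly uses the pullback property (not mere commutativity) of the competing square.
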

\begin{proof}
  Given $f:X\to Z$ and $g:Y\to Z$ forming a co-span in $\world$, we form
their pullback as $X\xleftarrow{f^{-1}} fX\cap gY \xrightarrow{g^{-1}}
Y$. This is minimal when $fX\cup gY = Z$. Conversely, given a span
$Y\xleftarrow{f} X \xrightarrow{g}Z$, we can complete to a minimal
pullback by \newcommand{\myin}[1]{\mathit{in}_{#1}}
\[
(Y\setminus fX) \uplus fX \xrightarrow{[\myin{1}, \myin{3}\circ f^{-1}]}
(Y\setminus fX) + (Z\setminus gX) + X
\xleftarrow{[\myin{2}, \myin{3}\circ g^{-1}]} (Z\setminus gX) \uplus gX
\]
where $[-,-]$ is case analysis on the disjoint union $Y = (Y\setminus
fX)\uplus fX$.
Thus a minimal pullback square in \world{} is of the form:
\begin{displaymath}
\vcenter{\xymatrix@C=1pc@R=0.5pc{ 
& X_1' \cup X_2'\\
X_1 \cong X_1' \ar[ru]^x & & X_2 \cong X_2'\ar[lu]_{x'}\\
& X_1' \cap X_2' \ar[ru]_{u'} \ar[lu]^u
}}
\end{displaymath}
The factorization property is straightforward.
\end{proof}
An object $\w$ of $\world$ models a set of generated/allocated names, with injective maps corresponding to renamings and extensions with newly generated names.



%

In $\world$, a minimal pullback corresponds to a \emph{partial bijection} between
$X_1$ and $X_2$, as used in other work on logical relations for generativity
\cite{DBLP:conf/mfcs/PittsS93,DBLP:conf/ppdp/BentonKBH07}. We write
$u:x\hookrightarrow y$ to mean that $u$ is a subset inclusion and also use the notation $x\hookrightarrow y$ to denote the subset inclusion map from $x$ to $y$. Of course, the use of this notation implies that $x\subseteq y$. Note
that if we have a span $u,u'$ then we can choose $x,x'$ so that
$\sq{x}{u}{x'}{u'}$ is a minimal pullback and one of $x$ and $x'$ is an inclusion. 
To do that, we simply replace the apex of any minimal pullback completion with an isomorphic one. The analogous property holds for completion of co-spans to pullbacks. 

In this paper, we fix the category of worlds to be $\world$. The general definitions, in particular that of setoid-valued functors that we are going to give, also make sense in other settings. For example, in our treatment of proof-relevant logical relations for reasoning about stateful computation \cite{benton14popl}, we build a category of worlds from partial equivalence relations on heaps.





\section{Setoid-valued functors} 
\label{subsec: func-setoids}

A functor $A$ from the category
of worlds $\world$ to the category of setoids comprises, as usual, for
each $\w\in \world$ a setoid $A\w$, and for each $u:\w\rightarrow \w'$
a morphism of setoids $Au:A\w\rightarrow A\w'$ preserving identities
and composition. This means that there exist continuous functions 
of type $\Pi a. A\w(a, (A \textit{id})\, a)$; and for any 
two morphisms $u : \w \to \w_1$
and $v: \w_1 \to \w_2$ a continuous function of type $\Pi a. A\w_2(Av(Au\, a), A(vu)\, a)$.

If $u:\w\rightarrow \w'$ and $a\in A\w$ we may write $u.a$ or even
$ua$ for $Au(a)$ and likewise for proofs in $A\w$. Note that there is a proof of equality of $(uv).a$ and $u.(v.a)$. In the sequel, we shall abbreviate `setoid-valued
functor(s)' as `\svf(s)'.

Intuitively, \svfs\ will become the denotations of types.
Thus, an element of $A\w$ is a value involving
at most the names in $\w$. If $u:\w\rightarrow \w_1$ then $A\w\ni a\mapsto u.a
\in A\w_1$ represents renaming and possible weakening by names not
``actually'' occurring in $a$. Note that due to the restriction to
injective functions identification of names (``contraction'') is
precluded. This is in line with Stark's use \cite{Stark94thesis} of set-valued functors on
the category $\world$ to model fresh names.

\begin{defi}
\label{def:p.p.f}
We call an \svf,  $A$, \emph{pullback-preserving} if for every
pullback square $\w\sq{x}{u}{x'}{u'}\w'$ with apex $\overline{\w}$ and
low point $\underline{\w}$ the diagram $A\w\sq{Ax}{Au}{Ax'}{Au'}A\w'$
is a pullback in $\Std$. This means that there is a continuous
function of type
\[
\Pi a\in A\w.\Pi a'\in A\w'.A\overline{\w}(x.a,x'.a')\rightarrow \Sigma
\underline{a}\in
A\underline{\w}.A\w(u.\underline{a},a)\times
A\w'(u'.\underline{a},a')
\]
\end{defi}
Thus, if two values $a\in A\w$ and $a'\in A\w'$ are equal in a common world
$\overline{\w}$ then this can only be the case because there is a value in
the ``intersection world'' $\underline{\w}$ from which both $a,a'$ arise. 

\iffull
Note that the ordering on worlds and world morphisms is discrete, so
continuity only involves the $A\w'(u.a,u.a')$ argument. 

The following proposition is proved using a pullback of the form $\sq{u}{v}{u}{v'}$.

\begin{prop}
  If $A$ is a pullback-preserving \svf, $u:\w\rightarrow \w'$ and $a,a'\in A\w$,
  there is a continuous function $A\w'(u.a,u.a')\rightarrow
  A\w(a,a')$. Moreover, the ``common ancestor'' $\underline{a}$ of $a$ and $a'$ is unique up to
$\sim$. 
\end{prop}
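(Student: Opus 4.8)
The plan is to exploit pullback-preservation along the \emph{kernel pair} of $u$. First I would note that, by the earlier proposition, every morphism in a category of worlds is a monomorphism, so $u:\w\rightarrow \w'$ is mono. Consequently the cospan $\w\xrightarrow{u}\w'\xleftarrow{u}\w$ is completed to a pullback by the diagonal square $\w\sq{u}{\id}{u}{\id}\w$, whose apex is $\w'$ and whose low point is $\w$ with both legs the identity: commutativity is trivial, and the universal property holds precisely because $u$ is mono (any competing pair $v,v'$ with $u v = u v'$ satisfies $v=v'$ and is then mediated uniquely by $v$ through the identity legs). This is the pullback of the form $\sq{u}{v}{u}{v'}$ announced before the statement.

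Next, since $A$ is pullback-preserving, applying Definition~\ref{def:p.p.f} to this square yields a continuous function
\[
\Pi a,a'\in A\w.\; A\w'(u.a,u.a')\;\rightarrow\;\Sigma\,\underline a\in A\w.\;A\w(\id.\underline a,a)\times A\w(\id.\underline a,a').
\]
I would then define the required map by sending a proof $p:u.a\sim u.a'$ to the triple $(\underline a,q,q')$ it produces and returning $t_A(s_A(q),q')$, where $s_A(q):a\sim\id.\underline a$ and $q':\id.\underline a\sim a'$, so the composite is a proof $a\sim a'$ in $A\w$. Note that this needs no explicit appeal to the identity-preservation witness, since $q$ and $q'$ already land in the correct proof-sets. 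Continuity is immediate, as the map is a composite of the continuous function above with the continuous symmetry and transitivity operations $s_A,t_A$ from the setoid structure.

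For the ``moreover'' clause I would reduce uniqueness of common ancestors to the map just built. Suppose $\underline a_1,\underline a_2$ are two common ancestors of values $a,a'$ related across a pullback $\w\sq{x}{u}{x'}{u'}\w'$, i.e.\ $u.\underline a_i\sim a$ (and $u'.\underline a_i\sim a'$) for $i=1,2$. Then symmetry and transitivity give $u.\underline a_1\sim u.\underline a_2$ in $A\w$, and the first part of the proposition applied to $u:\underline\w\rightarrow\w$ and the elements $\underline a_1,\underline a_2\in A\underline\w$ delivers a proof $\underline a_1\sim\underline a_2$; in the special case $a,a'\in A\w$ treated here the ancestor already lives in $\w$, so the conclusion $\underline a_1\sim a\sim\underline a_2$ is even more direct. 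The only real subtlety I anticipate is the bookkeeping around the identity legs $A\,\id$ and verifying that the chosen diagonal square is genuinely a pullback in $\world$; once the kernel-pair square is identified, pullback-preservation does the essential work and the remainder is routine setoid algebra.
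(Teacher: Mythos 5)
Your argument is correct and follows exactly the route the paper indicates: the paper's entire proof is the one-line remark that one uses a pullback of the form $\sq{u}{v}{u}{v'}$, which in $\world$ is precisely the kernel-pair square $\w\sq{u}{\id}{u}{\id}\w$ you construct (legitimate because $u$ is mono), after which pullback-preservation plus the setoid operations $s_A,t_A$ give the map, and the uniqueness clause follows by applying that map to the low-point leg, just as you do. No gaps; you have simply spelled out the details the paper leaves to the reader.
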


All the \svfs\ that we define in this paper will turn
out to be pullback-preserving. However, for the results described in this
paper pullback preservation is not needed. Thus,  we will not use it any
further, but note that there is always the option to require that
property should the need arise subsequently.

Morphisms between functors are  natural transformations in the usual sense; they serve to interpret terms with variables and functions. In more explicit terms, a morphism from \svf\ $A$ to \svf\ $B$ is an equivalence class of pairs
$e=(e_0,e_1)$ where $e_0$ and $e_1$ are continuous functions of the following types:
\[
\begin{array}{l}
 e_0:\Pi\w.A\w\rightarrow B\w \\
  e_1:\Pi\w.\Pi\w'.\Pi
x:\w\rightarrow\w'.\Pi a\in A\w.\Pi a'\in
A\w'.A\w'(x.a,a')\rightarrow B\w'(x.e_0(a),e_0(a'))
\end{array}
\]
Again, the requirements for continuity are simplified by the discrete ordering on worlds.

Two morphisms $e = (e_0,e_1),e' = (e_0',e_1')$ are identified  if there is a continuous function:
\[
  \mu:\Pi \w.\Pi a\in A\w.B\w(e(a),e'(a))
\]
where as in the case of setoids, we omit subscripts where appropriate.
These morphisms compose in the obvious way and so
the \svfs\ and morphisms between them form a category. 
\fi

\section{Instances of  setoid-valued functors}
\label{sec:instances}

We now describe some concrete functors that will
allow us to interpret types of the $\nu$-calculus as \svfs. The simplest one endows any predomain with the structure of an \svf\ where the  equality is proof-irrelevant and coincides with standard equality. The second one generalises the function space of setoids and is used to interpret function types. The third one is used to model dynamic allocation and is the only one that introduces proper proof-relevance. 
\subsection{Base types}
For each predomain $D$ we can define a constant \svf, denoted $D$ as well, with  $D\w$ defined as the discrete setoid over $D$ and $Du$ as the identity. These constant \svfs\ serve as denotations for base types like booleans or integers. 

The \svf\ $N$ of names is given by $N\w = \w$ where $\w$ on the right hand side stands for the discrete setoid over the discrete predomain of names in $\w$, and $Nu = u$ for $u : \w \rightarrow \w'$. Thus, e.g.\ $N\{1,2,3\}= \{1,2,3\}$. 
\subsection{Cartesian closure}
The category of \svfs\ is cartesian closed, which follows from well-known properties of functor categories. The construction of product and function space follows the usual pattern, but we give it here explicitly.

Let $A$ and $B$ be \svfs. The product $A\times B$ is given by taking a
pointwise product of setoids. For the sake of completeness, we note
that $(A\times B)\w=A\w\times B\w$ (product predomain) and $(A\times
B)\w((a,b),(a',b'))=A\w(a,a')\times B\w(b,b')$. This defines a
cartesian product on the category of \svfs. More generally, we can
define the indexed product $\prod_{i\in I}A_i$ of a family $(A_i)_i$ of
\svfs. We write $1$ for the empty indexed product and $()$ for the only element of $1\w$. Note that $1$ is the terminal object in the category of \svfs. 

 The function space $A\Rightarrow B$ is the \svf\ given as follows.
 $|(A\Rightarrow B)\w|$ contains pairs $(f_0,f_1)$ where
 $f_0(u)\in |A\w_1\Rightarrow B\w_1|$ for each $\w_1$ and
 $u:\w\rightarrow\w_1$. If $u:\w\rightarrow \w_1$ and
 $v:\w_1\rightarrow \w_2$ then
\[
f_1(u,v) \in (A\w_1\Rightarrow B\w_2)( [Av\Rightarrow B\w_2]\ f_0(vu), [A\w_1 \Rightarrow Bv]\ f_0(u))
\]
where 
\[\begin{array}{l}
{[}Av\Rightarrow B\w_2] : (A\w_2\Rightarrow B\w_2)  \rightarrow (A\w_1\Rightarrow B\w_2)\\
{[}A\w_1\Rightarrow Bv] : (A\w_1\Rightarrow B\w_1)  \rightarrow (A\w_1\Rightarrow B\w_2)
\end{array}
\]
are the obvious composition morphisms. 

A proof in  $(A\Rightarrow B)\w((f_0,f_1), (f_0',f_1'))$ is a function $g$ that for each $u:\w\rightarrow \w_1$ yields a proof $g(u)\in 
(A\w_1\Rightarrow B\w_1)(f_0(u),f_0'(u))$. 

The order on objects and proofs is pointwise as usual. 
The following is now clear from the definitions. 
\begin{prop}
The category of \svfs\ is cartesian closed.
\end{prop}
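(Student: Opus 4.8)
The plan is to establish cartesian closure by producing, for each \svf\ $A$, a right adjoint $A\Rightarrow(-)$ to the product functor $(-)\times A$. At the conceptual level this is an instance of the standard fact that a functor category $[\mathcal{C},\mathcal{V}]$ is cartesian closed whenever $\mathcal{V}$ is cartesian closed with enough limits; here $\mathcal{V}=\Std$, which is cartesian closed by the earlier Proposition, and the hand-crafted formula for $A\Rightarrow B$ given above is precisely the unfolding of the usual Kripke/end expression for the exponential. Since the paper gives the explicit formulas, I would verify the adjunction directly rather than merely cite the general theorem. Finite products are already in hand: the terminal object $1$ and the pointwise product $A\times B$ are limits computed pointwise, so their universal properties follow immediately from the corresponding ones in $\Std$ together with the fact that a cone of natural transformations is the same as a pointwise family of cones. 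The real content is therefore the natural bijection
\[
\mathrm{Hom}(C\times A, B)\;\cong\;\mathrm{Hom}(C, A\Rightarrow B),
\]
natural in $C$ and $B$.

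Concretely I would first define the evaluation morphism $\mathrm{ev}:(A\Rightarrow B)\times A\to B$, whose object part sends $((f_0,f_1),a)\in (A\Rightarrow B)\w\times A\w$ to $f_0(\mathrm{id}_\w)(a)\in B\w$, and whose proof-transport part is assembled from the world-indexed coherence $f_1$ together with the setoid-level evaluation of the earlier Proposition. For the transpose, given $h=(h_0,h_1):C\times A\to B$ and $c\in C\w$, I would build $\lambda(h)_0(c)=(f_0,f_1)\in(A\Rightarrow B)\w$ by currying at each future world: set $f_0(u)(a)=h_0(u.c,a)$ for $u:\w\to\w_1$ and $a\in A\w_1$, take the setoid-level component of $f_0(u)$ from $h_1$ exactly as in the function-space clause of the earlier Proposition (feeding the reflexivity proof $r(u.c)$ into the $C$-argument), and derive the coherence datum $f_1(u,v)$ from the naturality witness packaged in $h$ and the functoriality isomorphisms $u.(v.c)\sim(uv).c$ of $C$. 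The inverse sends $k:C\to(A\Rightarrow B)$ to $\mathrm{ev}\circ(k\times\mathrm{id}_A)$. Checking that these two passages are mutually inverse is the triangle-identity computation, carried out up to the setoid identification of morphisms, i.e.\ witnessed by a continuous family $\mu$.

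The main obstacle is bookkeeping rather than anything conceptual: one must confirm that the curried data $(f_0,f_1)$ genuinely inhabits $(A\Rightarrow B)\w$ — that is, that the coherence condition relating $f_0(vu)$ and $f_0(u)$ holds — and that all the assembled maps, both the object parts and the proof-transport parts $e_1$, are continuous and natural in the world argument. As noted after the definition of \svf, the discreteness of the order on worlds and world-morphisms collapses these continuity requirements to the purely setoid-theoretic ones, so each verification reduces to an instance of the argument already used for $\Std$, now threaded through the extra Kripke index. I expect no genuine difficulty, only care in propagating the proof-relevant witnesses through the currying so that the final identification of the two composites is realised by an honest continuous $\mu$ rather than an on-the-nose equality.
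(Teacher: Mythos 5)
Your proposal is correct and follows essentially the route the paper intends: the paper gives exactly this explicit Kripke-style function space, declares the proposition ``clear from the definitions'' (also noting it is an instance of the general fact that $\Dscr^{\Cscr}$ is cartesian closed when $\Dscr$ is), and your evaluation/currying construction is the straightforward unfolding of that claim, mirroring the earlier setoid-level argument world by world. The only elided verification in both cases is the routine check that the curried data satisfies the coherence condition and that the identification of the two composites is witnessed by a continuous $\mu$, which you correctly identify as bookkeeping.
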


\newcommand\Dscr{\mathcal{D}}

We remark that cartesian closure of the category of \svfs\ is an instance of the general results (see~\cite{nlab}) that if $\Dscr$ is cartesian closed and complete, then so is $\Dscr^\Cscr$ for any category $\Cscr$. Here be $\Dscr$ is the category of setoids described in Section~\ref{sec:setoids}.

\begin{defi}
An \svf\ $D$ is \emph{pointed} if $D\w$ is pointed for each $\w$ and the transition maps $Du:D\w\rightarrow D\w_1$ for $u:\w\rightarrow \w_1$ are strict. 
\end{defi}
\begin{thm}\label{fixte}
If $D$ is a pointed \svf\ then there exists a morphism
$Y:(D\Rightarrow D)\rightarrow D$ satisfying the equations from
Theorem~\ref{tfixit} understood relative to the cartesian closed structure of the category of \svfs.
\end{thm}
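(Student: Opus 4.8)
The plan is to define $Y$ worldwise by transporting the setoid fixpoint combinator of Theorem~\ref{tfixit} along the ``identity component'' of the exponential \svf. Since $D$ is a pointed \svf, each setoid $D\w$ is pointed, so Theorem~\ref{tfixit} supplies a morphism $Y_{D\w}:[D\w\Rightarrow D\w]\to D\w$. Given $f=(f_0,f_1)\in|(D\Rightarrow D)\w|$, recall that $f_0(\mathrm{id}_\w)\in|D\w\Rightarrow D\w|$; I would set
\[
(Y_\w)_0(f_0,f_1) \;=\; Y_{D\w}\bigl(f_0(\mathrm{id}_\w)\bigr),
\]
and take the proof component $(Y_\w)_1$ to send a proof $q$ of $(f_0,f_1)\sim(f_0',f_1')$ (which in particular yields a proof $q(\mathrm{id}_\w):f_0(\mathrm{id}_\w)\sim f_0'(\mathrm{id}_\w)$ in $D\w\Rightarrow D\w$) to the proof obtained by applying the proof part of the setoid morphism $Y_{D\w}$. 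That $Y_\w$ is a well-defined setoid morphism $(D\Rightarrow D)\w\to D\w$, and is continuous, then follows immediately from $Y_{D\w}$ being a setoid morphism and from continuity of evaluation at $\mathrm{id}_\w$.

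The first real step is to check that the family $(Y_\w)_\w$ is natural, i.e.\ constitutes a morphism of \svfs; I expect this to be the main obstacle. Fix $u:\w\to\w_1$. Unwinding the functor-category exponential, the transition map sends $f=(f_0,f_1)$ to an element whose identity component at $\w_1$ is $f_0(u)$, so $Y_{\w_1}\bigl((D\Rightarrow D)u\,f\bigr)=Y_{D\w_1}(f_0(u))$; I must therefore produce a proof $Du\bigl(Y_{D\w}(f_0(\mathrm{id}_\w))\bigr)\sim Y_{D\w_1}(f_0(u))$. The coherence datum $f_1(\mathrm{id}_\w,u)$ is exactly a witness that $f_0(u)\circ Du \sim Du\circ f_0(\mathrm{id}_\w)$ as setoid morphisms $D\w\to D\w_1$. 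Since $D$ is pointed, the transition map $Du$ is strict, so the (Uniformity) equation of Theorem~\ref{tfixit}, applied with strict map $Du$, with $g:=f_0(\mathrm{id}_\w)$ and $h:=f_0(u)$, yields precisely $Du(Y_{D\w}(f_0(\mathrm{id}_\w)))\sim Y_{D\w_1}(f_0(u))$. Collecting these witnesses over all $u$ and $f$, and combining them via transitivity with the within-world proof component above, yields the proof-transport part of $Y$ and establishes naturality. The delicate point is that the naturality of a \emph{fixpoint} is not automatic: it genuinely relies on the uniformity law together with strictness of the transition maps.

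It remains to verify the six equations, now understood in the cartesian closed structure of \svfs. The key lemma is that the CCC operations appearing in the equations all restrict, at a fixed world $\w$ and the identity component of the exponential, to the corresponding operations of the setoid CCC on $D\w$: evaluation sends $((f_0,f_1),d)$ to $f_0(\mathrm{id}_\w)(d)$; internal composition satisfies $(g\circ f)_0(\mathrm{id}_\w)=g_0(\mathrm{id}_\w)\circ f_0(\mathrm{id}_\w)$; tupling is pointwise; and the transpose $\lambda(h)$ satisfies $(\lambda(h)_\w(c))_0(\mathrm{id}_\w)=\bigl(a\mapsto h_\w(c,a)\bigr)$. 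Granting this, each equation, evaluated at an arbitrary world $\w$ and arbitrary elements, reduces term by term to the corresponding equation of Theorem~\ref{tfixit} for $Y_{D\w}$; the setoid-level witness supplies the required identification at $\w$, and, since the ordering on worlds is discrete, continuity and naturality of the assembled witness $\mu$ are inherited from the setoid witnesses. For instance (Fixpoint) reduces to $f_0(\mathrm{id}_\w)\bigl(Y_{D\w}(f_0(\mathrm{id}_\w))\bigr)\sim Y_{D\w}(f_0(\mathrm{id}_\w))$, and (Dinaturality) to $\bar f\bigl(Y_{D\w}(\bar g\circ\bar f)\bigr)\sim Y_{D\w}(\bar f\circ\bar g)$ with $\bar f=f_0(\mathrm{id}_\w)$ and $\bar g=g_0(\mathrm{id}_\w)$. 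In fact it suffices to treat the logically independent laws, since, exactly as in the proof of Theorem~\ref{tfixit}, (Fixpoint) follows from (Dinaturality) and (Power) from (Amalgamation) and (Dinaturality). The only genuinely new content beyond Theorem~\ref{tfixit} is thus the naturality argument; the equational verification is a routine worldwise transfer.
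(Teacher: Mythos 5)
Your proposal is correct and follows essentially the same route as the paper's proof: define $Y$ pointwise by $Y\w(f_0,f_1)=Y(f_0(\mathrm{id}_\w))$, obtain naturality from the coherence datum $f_1(\mathrm{id}_\w,u)$ together with strictness of $Du$ via the (Uniformity) law, and inherit the equations worldwise. You in fact spell out the ``translation of proofs'' and the worldwise transfer of the equational laws in more detail than the paper does.
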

\begin{proof}
The fixpoint combinator on the level of \svfs\ is defined pointwise. 
Given world $\w$ and $(f_0,f_1)\in (D\Rightarrow D)\w$ we define 
\[
Y\w(f_0,f_1) = Y(f_0(\textit{id}_\w))
\]
where $Y$ is the setoid fixpoint combinator from Theorem~\ref{tfixit}.
The translation of proofs is obvious. We need to show that this
defines a natural transformation. So, let $u:\w\rightarrow \w_1$ and
$(f_0,f_1)\in (D\Rightarrow D)\w$. Put $f:=f_0(\textit{id}_\w)$ and
$g:=f_0(u)$. We need to construct a proof that $Du(Y(f))\sim
Y(g)$. Now, $f_1$ furnishes a proof of $(Du)f=g$ and $Du$ is strict by
assumption on $D$ so that ``Uniformity'' furnishes the desired proof.

The laws from Theorem~\ref{tfixit} can be directly inherited. 
\end{proof}
\begin{defi}
An \svf\ $A$  is \emph{discrete} if $A\w$ is a discrete setoid for every world $\w$. 
\end{defi}
The constructions presented so far only yield discrete \svfs, \ie,
proof relevance is merely propagated, but never actually created. This
is not so for the next operator on \svfs, which is to model dynamic
allocation.

\section{Dynamic Allocation Monad} 

Before we define the dynamic allocation monad we recall Stark's
\cite{Stark94thesis} definition of a dynamic allocation monad for the category of
\emph{set-valued functors} on the category of worlds. For set-valued
functor $A$, Stark defines a set-valued functor $TA$ by
$TA\w=\{(\w_1,a)\mid \w\subseteq \w_1, a\in A\w_1\}/{\sim}$ where
$(\w_1,a)\sim(\w_1',a')$ iff there exist maps $x:\w_1\rightarrow
\overline\w$, $x':\w_1'\rightarrow \overline \w$ for some $\overline
\w$ satisfying $x.i=x'.i'$ and $x.a=x'.a'$ where
$i:\w\hookrightarrow\w_1$ and $i':\w \hookrightarrow \w_1'$ are the
inclusion maps. 

Our dynamic allocation monad for \svfs\ essentially mimics this
definition, the difference being that the maps $i,i'$ witnessing
equivalence of elements now become proofs of
$\sim$-equality. Additionally, our definition is based on predomains and
involves a bottom element for recursion.

\subsection{Definition of the monad}

Let $A$ be an \svf. We put 
\[
|TA\w| = \{(\w_1,a)\mid \w\subseteq \w_1\wedge a\in A\w_1\}_\bot
\]
Thus, a non-bottom element of $TA\w$ consists of an extension of $\w$ together
with an element of $A$ taken at that extension. Note that the
extension is not existentially quantified, but an inherent part of the element.

The ordering is given by $(\w_1,a)\leq (\w_1',a')$ if 
$\w_1=\w_1'$ and $a\leq a'$ in $A\w_1$ and of course, $\bot$ is the
least element of $TA\w$.

The proofs are defined as follows. First,
$TA\w(\bot,\bot)=\{\bot\}$ and second, the elements of
$TA\w((\w_1,a),(\w_1',a'))$ are triples $(x,x',p)$ where $x,x'$
complete the inclusions $u:\w\hookrightarrow \w_1$ and
$u':\w\hookrightarrow \w_1'$ to a commuting square
\begin{displaymath}
\vcenter{\xymatrix@C=1.2pc@R=0.5pc{ 
& \overline{\w} & \\
 \w_1\ar[ur]^{x} & & \w_1'\ar[ul]_{x'} \\
& \w\ar@{^{(}->}[ul]^{u} \ar@{_{(}->}[ur]_{u'}
}}
\end{displaymath}
with $\overline \w=\cod{x}=\cod{x'}$. The third component $p$ then is a
proof that $a$ and $a'$ are equal when transported to $\overline\w$,
formally, $p\in A\overline \w(x.a,x'.a')$. The ordering is again
discrete in $x,x'$ and inherited from $A$ in $p$. Formally,
$((\w_1,a),(\w_1',a'),(x,x',p))$ $\leq$
$((\w_1,b),(\w_1',b'),(x,x',q))$ when $(x.a,x'.a',p)\leq (x.b,x'.b,q)$
in $A \cod x$ and of course $(\bot,\bot,\bot)$ is the least
element. No $\leq$-relation exists between triples with different
mediating co-span. In particular, in an ascending chain of proofs the witnessing spans are always the same, which is the intuitive reason why they can be patched together to form a supremum.

Consider, for example, that $\w=\{0\}$, $\w_1=\{0,1,2\}$,
$\w_1'=\{0,2,3\}$.  Then, both $\cval=(\w_1,\,(0,2))$ and $\cval'=(\w_1',\,(0,3))$ are elements of
$T(N\times N)\w$, and  $(x,x',p)\in T(N\times N)\w(\cval,\cval')$ is a proof that the two are equal
where $x:\w_1\rightarrow\overline \w=\{0,1,2,3\}$ sends $0\mapsto 0$,
$1\mapsto 1$, $2\mapsto 2$ and $x':\w_1'\rightarrow\overline \w$ sends
$0\mapsto 0$, $2\mapsto 3$, $3\mapsto 2$. The proof $p$ is the
canonical proof by reflexivity. Note that, in this case, the order relation is
trivial. It becomes more interesting when the type of values $A$ is a
function space. 

Next, we define the morphism part. 
Assume that $u : \w \rightarrow \q$ is a morphism in $\world$. We want to construct a
morphism $Au : TA \w \to TA\q$ in $\Std$. So let $(\w_1,a)\in TA\w$ and $i:\w\hookrightarrow \w_1$ be the inclusion. We complete the span $i,u$ to a minimal pullback 
\begin{displaymath}
\vcenter{\xymatrix{ 
\w_1\ar[r]^{u_1} & \q_1\\
\w\ar[r]_u\ar@{^{(}->}[u]^i& \q\ar@{^{(}->}[u]_{j}}}
\end{displaymath}
with $j$ an inclusion as indicated. We then define
$TAu(\w_1,a)=(\q_1,u_1.a)$. 
We
assume a function that returns such completions to minimal pullbacks
in some chosen way. The particular choice is unimportant.

Picking up the previous example and letting $u:\w\rightarrow \q=\{0,1\}$ be $0\mapsto 1$ then a possible completion to a minimal pullback would be 
\begin{displaymath}
\vcenter{\xymatrixcolsep{5pc}\xymatrix{ 
\w_1=\{0,1,2\}\ar[r]^{0\mapsto 1,1\mapsto 2,2\mapsto 3} & \{0,1,2,3\}=\q_1\\
\w=\{0\}\ar[r]_{0\mapsto 1}\ar@{^{(}->}[u]^i& \{0,1\}=\q\ar@{^{(}->}[u]_{j}}}
\end{displaymath}
Note that the following square where the additional name $1$ in $\q$ is identified with a name already existing in $\w_1$ is \emph{not} a pullback 
\begin{equation}
\vcenter{\xymatrixcolsep{5pc}\xymatrix{ 
\w_1=\{0,1,2\}\ar[r]^{0\mapsto 1,1\mapsto 0,2\mapsto 3} & \{0,1,2,3\}\\
\w=\{0\}\ar[r]_{0\mapsto 1}\ar@{^{(}->}[u]^i& \{0,1\}=\q\ar@{^{(}->}[u]}}
\label{eq:comm-squa}
\end{equation}
Adding extra garbage into $\q_1$ like so would result in a pullback that is not minimal. 
\begin{displaymath}
\vcenter{\xymatrixcolsep{5pc}\xymatrix{ 
\w_1=\{0,1,2\}\ar[r]^{0\mapsto 1,1\mapsto 2,2\mapsto 3} & \{0,1,2,3,4,5\}\\
\w=\{0\}\ar[r]_{0\mapsto 1}\ar@{^{(}->}[u]^i& \{0,1\}=\q\ar@{^{(}->}[u]}}
\end{displaymath}
If $(x,x',p)$ is a proof of $(\w_1,a)\sim(\w_1',a')$
then we obtain a proof, $(\q_1',u_1'.a')$, that $TAu(\w_1,a)\sim TAu(\w_1',
a')$ as follows. We first complete
the span $xi,u$ to a minimal pullback with apex $\overline \q$ and upper arrow
$\overline u:\cod x=\overline \w\rightarrow \overline \q$. Now minimality of the pullbacks apexed at $\q_1$ and $\q_1'$ furnishes
morphisms $y:\q_1\rightarrow \overline \q$ and $y':\q_1'\rightarrow \overline \q$ so that $yj=y'j'$ (where 
$j':\q\hookrightarrow \q_1'$). We then have $(y,y',\overline u.p):TAu(\w_1,a)\sim TAu(\w_1',a')$ as required. This
shows that the passage $(\w_1,a)\mapsto (\w_1',a')$ actually does define a
morphism of setoids.
\begin{displaymath}
\vcenter{\xymatrix@C=1pc@R=0.5pc{ 
& \overline{\w} \ar[rrr]^{\overline{u}} & & & \overline{\q} & \\
& & \w_1' \ar[ul]_{x'} \ar[rrr]^<(0.15){u_1'} & & & \q_1' \ar[ul]_{y'}\\
\w_1 \ar[uur]^x \ar[rrr]_<(0.6){u_1}|<(0.31)\hole & & & \q_1 \ar[uur]^y|\hole \\ \\
\w\ar@{_{(}->}[uuurr]_{i'} \ar@{^{(}->}[uu]^i \ar[rrr]_{u}& & & \q \ar@{_{(}->}[uuurr]_{j'} \ar@{^{(}->}[uu]^j
}}
\end{displaymath}
The functor laws amount to similar constructions of $\sim$-witnesses
and are left to the reader.
The following is direct from the definitions. 
\begin{prop}
$T$ is a strong monad on the category of \svfs. The unit sends $\val\in A\w$
to $(\w,\val)\in (TA)\w$. The multiplication sends 
$(\w_1,(\w_2,v))\in (TTA)\w$ to $(\w_2,v)\in TA\w$. The strength map sends $(a,(\w_1,b))\in (A\times TB)\w$ to $(\w_1,(a.i,b))$ where $i:\w\hookrightarrow\w_1$. 

Notice that if we had taken any arbitrary commuting square, like the one shown in Equation~\ref{eq:comm-squa}, then preservation of proofs could not be guaranteed because names in extensions would be captured in an arbitrary way. Requiring minimality, on the other hand, is merely a technical convenience.

\end{prop}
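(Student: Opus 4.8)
The plan is to verify in turn that $T$ is an endofunctor on \svfs, that $\eta$ and $\mu$ are natural transformations satisfying the three monad laws, and that the strength $t$ is natural and coherent; in each case the underlying functions turn out to agree \emph{strictly}, so that the only real work lies in the proof-relevant (setoid) components. First I would complete the definition of $T$ on morphisms: for $f=(f_0,f_1):A\to B$ set $(Tf)_\w(\w_1,a)=(\w_1,(f_0)_{\w_1}(a))$ and $(Tf)_\w\bot=\bot$, with the proof component sending a triple $(x,x',p)$ to $(x,x',(f_1)_{\overline\w}(p))$. That this is a morphism of \svfs, and that $T$ preserves identities and composition, reduces to naturality of $f$ together with the functor laws of $A$ and $B$ recorded earlier (the proof of equality of $(uv).a$ and $u.(v.a)$).

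Next, the computations at the level of underlying functions. For the monad laws one checks directly that $\mu^\w_A(T\eta_A)^\w(\w_1,a)=\mu^\w_A(\w_1,(\w_1,a))=(\w_1,a)$ and $\mu^\w_A(\eta_{TA})^\w(\w_1,a)=\mu^\w_A(\w,(\w_1,a))=(\w_1,a)$, so both unit laws hold on the nose, and likewise $\mu^\w_A\mu^\w_{TA}(\w_1,(\w_2,(\w_3,v)))=(\w_3,v)=\mu^\w_A(T\mu_A)^\w(\w_1,(\w_2,(\w_3,v)))$ gives associativity on the nose. For the strength, $t^\w_{A,B}(a,\eta^\w_B(b))=(\w,(a.i,b))$ with $i=\mathrm{id}_\w$, which equals $\eta^\w_{A\times B}(a,b)$ up to the identity-preservation proof of $A$. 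Since two \svf-morphisms are identified precisely when their underlying functions are connected by a continuous family of equality proofs, each of these equalities is then witnessed by the evident reflexivity (and functor-law) proofs, so nothing beyond bookkeeping of witnesses remains for the laws themselves.

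The real content is the proof-transport data, and here the central reusable fact is a \emph{pasting lemma} for minimal pullbacks: the minimal-pullback completion of a composite span agrees, via the unique mediating isomorphism between apices guaranteed by minimality, with the result of completing the two constituent spans in sequence. I would isolate this once, using that in a category of worlds all morphisms are monomorphisms and the legs of a minimal pullback are jointly epic (the Proposition above), so that the comparison map between two completions of one span is forced and is an isomorphism. Applied to the inclusion $\w\hookrightarrow\w_1$ followed by $\w_1\hookrightarrow\w_2$, it yields functoriality $TA(vu)\sim TAv\circ TAu$ in the world argument and the naturality data ($e_1$) of $\mu$: chasing $(\w_1,(\w_2,v))$ through the construction produces on one side a single completion of $\w\hookrightarrow\w_2$ along $u$ and on the other a completion of $\w\hookrightarrow\w_1$ along $u$ followed by a completion of $\w_1\hookrightarrow\w_2$ along the induced map, and the pasting lemma furnishes the iso $\q_2\cong\q_2'$ along which the proof component is transported by $A$. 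The $e_1$ data for $\eta$ is the degenerate case where the first inclusion is an identity.

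I expect the main obstacle to be exactly these chases for $\mu$ and for preservation of composition, namely constructing the mediating isomorphism between the two apices and verifying that transporting $p$ along it lands in the target proof set rather than merely producing an \emph{ad hoc} commuting square. This is precisely where minimality is indispensable, as the closing remark indicates: completing with an arbitrary commuting square such as Equation~\ref{eq:comm-squa} identifies fresh names of one extension with pre-existing names of the other, so the comparison map would no longer be canonical and proofs would fail to transport. The remaining strength-coherence laws (the unitor law against the terminal object $1$, associativity with the product, and the compatibility $t_{A,B}\circ(\mathrm{id}\times\mu_B)=\mu_{A\times B}\circ T(t_{A,B})\circ t_{A,TB}$) follow the same template: the underlying functions coincide up to the inclusion/identity action of $A$, and the proofs are transported along the isomorphisms supplied by the pasting lemma, so I would treat them uniformly and leave the diagram bookkeeping to the reader.
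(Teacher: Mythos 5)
Your proposal is correct and fleshes out exactly what the paper leaves implicit: the paper offers no proof beyond ``direct from the definitions,'' and your elaboration --- strict equalities for the monad laws on underlying functions, with the genuine work concentrated in transporting proofs along the unique isomorphism between any two minimal-pullback completions of a span (forced by minimality plus joint epicness of the legs) --- is the intended argument and is precisely why the paper can assert that the particular choice of completions is unimportant. Your isolation of the pasting lemma for minimal pullbacks as the single reusable ingredient for functoriality, naturality of $\mu$, and the strength coherences is the right way to organise it.
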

\subsection{Comparison with cpo-valued functors}
The flawed attempt at defining a dynamic allocation monad for
FM-domains discussed by Shinwell~\cite{shinwell04phd} 
and mentioned in the introduction can
be reformulated in terms of cpo-valued functors and further highlights
the importance of proof-relevant equality. 

\newcommand\tsp{T_{\! \! sp}} 

Given a cpo-valued functor $A$ one may construct a poset-valued
functor $\tsp A$ which has for underlying set equivalence classes of
pairs $(\w_1,a)$ with $\w \subseteq \w_1$ and $a \in A\w_1$. As in Stark's definition above, we have a $(\w_1,a) \sim (\w_1',a')$ if there are morphisms $x,x'$ such that $xi = x'i'$ and $x.a = x'.a'$ where $i : \w \rightarrow \w_1$, $i':\w \rightarrow \w_1'$ are the inclusions. As for the ordering,
the only reasonable choice is to decree that on representatives
$(\w_1,a)\leq (\w_1',a')$ if $x.a\leq x'.a'$ for some co-span $x,x'$
with $xi=x'i'$ where $i,i'$ are the inclusions as above.  However,
while this defines a partial order it is not clear why it should have
suprema of ascending chains because the
witnessing spans might not match up so that they can be pasted to a witnessing span for the limit of the chain. Indeed, Shinwell's
thesis~\cite{shinwell04phd} contains a concrete counterexample, which is due to Pitts.

In our notation, Pitts's counterexample takes the following form. Define the cpo-valued functor $A$ by $A\w := (\Pscr(\w),\subseteq)$. So the elements of $A\w$ are subsets of $\w$ ordered by inclusion, hence a finite cpo. Let us now examine $\tsp A$. An element of $\tsp A\w$ is an $\sim$-equivalence class of pairs $(\w_1,U)$ where $U \subseteq \w_1$, $\w \subseteq \w_1$. Furthermore, $(\w_1,U) \sim (\w_1',U')$ whenever $U = U'$ and the ordering $\leq$ on $\tsp A\w$ is $(\w_1,U) \leq (\w_1',U')$ whenever $U \subseteq U'$. Let $t_n$ be the equivalence class of $(\{0,\ldots,n-1\},\{0,\ldots,n-1\})$. We have $t_n \in \tsp A\emptyset$ for all $n$ and $t_n \leq t_m\iff n \leq m$. From this it is clear that the ascending chain $t_0 \leq t_1 \leq \cdots$ does not have a least upper bound in $\tsp A\emptyset$ for if $(\w_1,U)$ were such an upper bound then $|U| \geq n$ would have to hold for all $n$.

The transition to proof relevance that we have made allows us to define the order on representatives as we have done and thus to bypass these difficulties. We view $A$ above as an \svf\ with underlying cpo $A\w = \w$ and, trivial, \ie, discrete equality. Now applying our dynamic allocation monad $T$ to $A$ yields the \svf\ $TA\w$ whose underlying cpo contains in addition to $\bot$, pairs $(\w_1,U)$ where $U \subseteq \w_1$ with ordering $(\w_1,U) \leq (\w_1',U')$ if $\w_1 = \w_1'$ and $U \subseteq U'$. A proof that an element $(\w_1,U)$ is equal to the element $(\w_1',U')$ is given by a triple $(\w_2,u,u')$ such that $u :\w_1 \rightarrow \w_2$ and $u': \w_1' \rightarrow \w_2$ and moreover $u(U) = u'(U')$. The ordering in these proofs is the discrete one. Now the sequence shown above is not an ascending chain and thus is no longer a counter-example to completeness.


\section{Observational Equivalence and Fundamental Lemma}
\label{sec:prlr}
We now construct the machinery that connects the concrete language with the 
denotational machinery introduced in Section~\ref{sec:semantics}.
The semantics of types, written using $\sem{\cdot}{}$, as \svfs\ is defined inductively as follows:
\begin{itemize}
 \item For basic types $\sem{\tau}{}$ is the corresponding discrete \svf.
 \item $ \sem{\tau \to \tau'}$ is defined as the function space $\sem{\tau} \to T \sem{\tau'}$, 
 where $T$ is the dynamic allocation monad. 
\item For typing context $\Gamma$ we define $\sem{\Gamma}$ as the indexed product of \svfs\ $\prod_{x\in\dom{\Gamma}}\sem{\Gamma(x)}$. 
\end{itemize}

To each term in context $\Gamma\vdash e:{\tau}$ we can associate a
morphism $\sem{e}$ from $\sem{\Gamma}$ to $T\sem{\tau}$ by
interpreting the syntax in the category of \svfs\ using cartesian
closure, the fixpoint combinator, and the fact that $T$ is a strong monad. We
omit most of the straightforward but perhaps slightly tedious definition and
only give the clauses for ``new'' and ``let'' here:
\[
\sem{\New}\w = (\w\cup\{n+1\},n+1)
\]
where $n = \max(\w)$ and 
$\max(\w)=\max(\{n\mid n\in\w\})$, \ie, the greatest number in the world $\w$.

If $f_1:\sem{\Gamma}\rightarrow T \sem{\tau_1}$ and $f_2:\sem{\Gamma,x{:}\tau_1}\rightarrow T\sem{\tau_2}$ are the denotations of $\Gamma\vdash e_1:\tau_1$ and 
$\Gamma,x{:}\tau_2\vdash e_2:\tau_2$ then the interpretation of $\letin{x}{e_1}{e_2}$ is the morphism $f:\sem{\Gamma}\rightarrow T\sem{\tau_2}$ given by 
\[
f  =  \mu \circ T f_2 \circ \sigma \circ \langle \textit{id}_\Gamma,f_1\rangle
\]
where $\mu$ is the monad multiplication, $\sigma$ is the monad strength and where we have made the simplifying assumption that $\sem{\Gamma,x{:}\tau}=\sem{\Gamma}\times\sem{\tau}$. Assuming that $f_1$ and $f_2$ now stand for the first components of concrete representatives of these morphisms, one particular concrete representative of this morphism (now also denoted $f$) satisfies: 
\[\begin{array}{l}
f\w(\gamma) = f_2(i.\gamma,a), 
\mbox{where }i \mbox{ is the inclusion } \w{\hookrightarrow} \w_1 \mbox { and } f_1\w(\gamma)=(\w_1,a).\\
\end{array}\]
Our aim is now to relate these morphisms to the computational interpretation $\semV{e}$. 
\begin{defi}
\label{def:realizability}
For each type $\tau$ and world $\w$ we define two relations; the relation $\Vdash^\tau_\w \subseteq \semV{\tau}\times \sem{\tau}\w$ and $\Vdash^{T\tau}_\w \subseteq (\mathbb{N} \rightarrow (\mathbb{N} \times \semV{\tau})_\bot) \times T\sem{\tau}\w$ by the following clauses.  
\[\begin{array}{l}
\bbval \Vdash^{\booltype}_\w \bval \iff \bbval = \bval\\
\iival \Vdash^{\inttype}_\w \ival \iff \iival = \ival\\
l \Vdash^{\loctype}_\w k \iff l=k\\
f\Vdash^{\tau\to \tau'}_\w g\iff \forall \w_1\supseteq\w.\forall \vval. \forall \val.
\vval\Vdash^\tau_{\w_1} \val \Rightarrow f(\vval)\Vdash_{\w_1}^{T \tau'} g_0(\w{\hookrightarrow}\w_1,\val)\\[5pt]
\ccval\Vdash^{T\tau}_\w \cval\iff 
\begin{array}{l}
    [\ccval(\max(\w)+1)=\bot \Leftrightarrow \cval=\bot]~ \land\\
 {[}\ccval(\max(\w)+1)= (n_1,\vval) \land 
\cval = (\w_1, \val) \Rightarrow 
       n_1 =\max(\w_1)+1 \wedge \vval \Vdash_{\w_1}^\tau \val))].
\end{array}
\end{array}
\]
\end{defi}
Notice that $T\tau$ is not part of the syntax, but $T$ is a marker to distinguish the two relations defined above.

The following lemma states that 
the realizability relation is stable with respect to enlargement of worlds. It is needed for the ``fundamental lemma'' \ref{fundi}.
\begin{lem}
\label{lemma:contexts}
Let $\tau$ be a type. 
If $u:\w\hookrightarrow \w_1$ is an inclusion as indicated and $\vval\Vdash^\tau_\w \val$ then $\vval \Vdash^\tau_{\w_1}u.\val$, too. 
\end{lem}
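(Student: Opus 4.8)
The plan is to induct on the structure of the type $\tau$. Throughout, it is worth keeping in mind that the concrete-side witness ($\bbval$, $\iival$, $l$, or $f$) is held fixed while only the abstract value $\val$ is transported along $u$ to $u.\val$; the whole reason for restricting $u$ to an \emph{inclusion} is that inclusions act as the identity on underlying names, so that the transported value still matches the unchanged concrete witness.

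For the base types the statement is immediate. For $\booltype$ and $\inttype$ the interpreting \svfs\ are constant, so the transition map is the identity, $u.\val = \val$, and the defining equations $\bbval = \bval$ and $\iival = \ival$ are literally unchanged. For $\loctype$ we have $\sem{\loctype} = N$ with $Nu = u$, and since $u : \w \hookrightarrow \w_1$ is a subset inclusion we get $u.k = u(k) = k$; as $l \Vdash^{\loctype}_\w k$ means simply $l = k$, the relation is preserved. This is precisely the step that would fail for a non-inclusion (a genuine renaming) $u$, which explains why the hypothesis that $u$ is an inclusion is essential.

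The only case with real content is a function type $\tau_1 \to \tau_2$. Unfolding, $f \Vdash^{\tau_1 \to \tau_2}_{\w_1} u.g$ demands that for every $\w_2 \supseteq \w_1$ and every $\vval \Vdash^{\tau_1}_{\w_2} \val$ one has $f(\vval) \Vdash^{T\tau_2}_{\w_2} (u.g)_0(\w_1 \hookrightarrow \w_2, \val)$. First I would compute the transition map on the function-space \svf: by its construction the transported representative precomposes the world argument, i.e.\ $(u.g)_0(v, \val) = g_0(v u, \val)$ for any $v$ with domain $\w_1$. Taking $v = (\w_1 \hookrightarrow \w_2)$ together with the given $u = (\w \hookrightarrow \w_1)$, and using that a composite of two inclusions is again the inclusion $\w \hookrightarrow \w_2$, this yields $(u.g)_0(\w_1 \hookrightarrow \w_2, \val) = g_0(\w \hookrightarrow \w_2, \val)$. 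Since the hypothesis $f \Vdash^{\tau_1 \to \tau_2}_\w g$ already quantifies over \emph{all} worlds above $\w$, and every $\w_2 \supseteq \w_1 \supseteq \w$ is in particular above $\w$, instantiating the hypothesis at $\w_2$, $\vval$, $\val$ gives exactly $f(\vval) \Vdash^{T\tau_2}_{\w_2} g_0(\w \hookrightarrow \w_2, \val)$, as required. Notice that the induction hypothesis is never actually invoked: world-monotonicity is already built into the Kripke-style (quantified over future worlds) definition of the arrow relation, and the needed $T\tau_2$-fact is obtained directly rather than by weakening a smaller-world fact.

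The one point requiring care — and essentially the only place to slip — is the bookkeeping for the transition map $u.g$ on the function-space \svf: one must verify that transporting $g$ along $u$ precomposes the world argument by $u$, and that the resulting composite of inclusions is again the expected $\w \hookrightarrow \w_2$. Once this is settled, the remaining cases are a direct reading-off of Definition~\ref{def:realizability}.
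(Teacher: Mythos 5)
Your proof is correct and follows the same route as the paper, which simply states that the lemma holds by a straightforward induction on types with the inclusion hypothesis being essential for $\loctype$ and function types --- exactly the two points you elaborate. Your additional observation that the induction hypothesis is never actually invoked (because Kripke monotonicity is built into the arrow clause) is accurate and consistent with the paper's treatment.
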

The proof is by a straightforward induction on types. Note, however, that the restriction to inclusions is important for the cases of function type and the type $\loctype$.
We extend $\Vdash$ to typing contexts by putting 
\[
\eta\Vdash^\Gamma_\w\gamma\iff \forall x\in\dom{\Gamma}.\eta(x)\Vdash^{\Gamma(x)}_\w \gamma(x)
\]
for $\eta\in\semV{\Gamma}$ and $\gamma\in\sem{\Gamma}$. 
\begin{thm}[Fundamental lemma]\label{fundi}
Let $\Gamma\vdash e:\tau$ be a well typed term. There exists a representative 
$(\cval,\_)$ of the  equivalence class $\sem{e}$ at world $\w$ such that 
if  $\eta\Vdash^\Gamma_\w\gamma$ then 
$\semV{e}\eta\Vdash^{T\tau}_\w \cval (\gamma)$. 
\end{thm}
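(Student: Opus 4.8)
The plan is to prove the statement by induction on the typing derivation of $\Gamma\vdash e:\tau$. Because values and computations are typed by mutually-defined judgements and the grammar coerces a value into a computation, I would first strengthen the claim to a simultaneous induction: for a value $\Gamma\vdashv v:\tau$ exhibit a representative $v_0$ of $\sem{v}:\sem{\Gamma}\to\sem{\tau}$ with $\eta\Vdash^\Gamma_\w\gamma$ implying $\semV{v}\eta\Vdash^\tau_\w v_0(\gamma)$ (using the \emph{value} relation), and for a computation $\Gamma\vdashc e:\tau$ exhibit a representative $\cval$ of $\sem{e}:\sem{\Gamma}\to T\sem{\tau}$ satisfying the displayed conclusion (using the $\Vdash^{T\tau}$ relation). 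Throughout only the first components of representatives are tracked, since the statement leaves the second component unconstrained. The theorem as stated is then the computation half of this strengthened claim.

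Before the main induction I would record one auxiliary fact: for every $\tau$ and $\w$ the relations $\Vdash^\tau_\w$ and $\Vdash^{T\tau}_\w$ are \emph{admissible}, that is, closed under suprema of ascending chains taken simultaneously in both the concrete and the abstract argument. This follows by a routine induction on $\tau$. At base types the relation is equality of flat elements; at $\tau\to\tau'$ it is a universal statement whose conclusion is $\Vdash^{T\tau'}$ and in which concrete application and abstract pointwise evaluation are continuous, so admissibility is inherited from $\Vdash^{T\tau'}$; at $T\tau$ one uses that an ascending chain in $TA\w$ is eventually constant in its world component (by the definition of the order on $TA\w$), so the chain's supremum either remains $\bot$ or lives in a fixed $\w_1$, reducing admissibility to that of $\Vdash^\tau_{\w_1}$.

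The routine cases of the main induction are: variables and constants (immediate from the context hypothesis and reflexivity of equality); arithmetic and equality tests (equal arguments yield equal results, where for $\loctype$ the relation is literally equality of names); conditionals (by the boolean case both semantics select the same branch, so the induction hypothesis on the chosen branch applies); application (instantiate the function-type relation at $\w_1=\w$); the coercion of a value to a computation (both monad units leave the name supply and world unchanged, matching the $\Vdash^{T\tau}$ clause); and $\New$ (a direct computation: evaluating $[\lambda n.(n+1,n)]$ at $\max(\w)+1$ gives $(\max(\w)+2,\max(\w)+1)$, while $\sem{\New}\w=(\w\cup\{\max(\w)+1\},\max(\w)+1)$ and $\max(\w\cup\{\max(\w)+1\})+1=\max(\w)+2$, so the name-supply and name components agree). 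The only genuinely interactive first-order case is $\letin{x}{e_1}{e_2}$: I would take the concrete representative $f\w(\gamma)=f_2(i.\gamma,a)$ from the text, split on whether $\semV{e_1}\eta$ diverges at $\max(\w)+1$ (then $\cval_1(\gamma)=\bot$ and both sides diverge), and otherwise read off from $\semV{e_1}\eta\Vdash^{T\tau_1}_\w\cval_1(\gamma)$ that $\cval_1(\gamma)=(\w_1,a)$ with the continued name supply equal to $\max(\w_1)+1$ and $\vval\Vdash^{\tau_1}_{\w_1}a$; then Lemma~\ref{lemma:contexts} weakens $\gamma$ along $\w\hookrightarrow\w_1$ so that the extended environments are related at $\w_1$, and the induction hypothesis for $e_2$ closes the case. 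The reason $\Vdash^{T\tau}$ is phrased via evaluation at $\max(\w)+1$ is precisely that the concrete state monad threads a single integer name supply that must stay synchronized with $\max(\w_1)+1$ across sequencing.

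The main obstacle is the recursive function $\vfix{f}{x}{e}$. Concretely $\semV{\vfix{f}{x}{e}}\eta$ is the least fixpoint $\sup_n g_n$ of the functional $\Phi(\phi)=\lambda\vval.\semV{e}(\eta,\phi,\vval)$, with $g_0=\bot$ the always-diverging function; abstractly $\sem{\vfix{f}{x}{e}}(\gamma)=Y(F)=\sup_n d_n$ is the fixpoint of Theorem~\ref{fixte}, built from the pointwise chain $d_{n+1}=F_0(d_n)$ with $d_0=\bot$ exactly as in the proof of Theorem~\ref{tfixit}. I would prove $g_n\Vdash^{\tau\to\tau'}_\w d_n$ by induction on $n$: the base case reduces to $\bot\Vdash^{T\tau'}_{\w_1}\bot$, and the step is an application of the induction hypothesis for the body $e$ in the extended context $\Gamma,f{:}\tau\to\tau',x{:}\tau$, feeding in $g_n\Vdash d_n$ for $f$ (weakened to $\w_1$ via Lemma~\ref{lemma:contexts}) and an arbitrary $\vval\Vdash^\tau_{\w_1}\val$ for $x$. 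Passing to the supremum over $n$ then invokes the admissibility lemma. The delicate points will be (i) verifying that the abstract fixpoint is genuinely computed from the iteration $d_{n+1}=F_0(d_n)$ matching the concrete iteration of $\Phi$, which is where the explicit description of $Y$ in Theorem~\ref{tfixit} is needed, and (ii) handling the Kripke future-world quantification of the function relation correctly when weakening $d_n$ along the inclusion, which relies on $\sem{\tau\to\tau'}$ being a pointed \svf\ with strict transition maps, the very condition that makes $Y$ available.
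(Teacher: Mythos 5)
Your proposal is correct and follows essentially the same route as the paper: induction on the typing derivation, choosing as representative the witness arising from the definition of $\sem{e}$, with the $\New$ and $\keywd{let}$ cases handled exactly as in the text (your arithmetic for $\New$ is in fact the cleaner version). The paper dismisses the remaining cases as straightforward and omits the recursion case entirely; your admissibility lemma for $\Vdash$ together with the approximant-by-approximant comparison of the concrete least fixpoint with the chain $d_{i+1}=f_0(d_i)$ from Theorem~\ref{tfixit} is the natural way to fill that gap and is consistent with the paper's setup.
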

\begin{proof}
By induction on typing rules. We always chose for the representative the one given as witness in the definition of $\sem{e}$.
Most of the cases are straightforward. For illustration we show $\New$ and  $\lett$:
As for $\New$, we pick the representative $\cval$ that at world $\w$ returns $(\w\cup\{\max(\w) +1\},\max(\w))$. Now, with 
$\ccval=\semV{\New}$, we have $\ccval(\max(\w))=(\max (\w)+1,\max(\w))$ and 
$\ccval\Vdash_\w^{TN}\cval$ holds, since $\max(\w\cup\{\max(\w) +1\})=\max(\w) +1$.

Next, assume that $\Gamma\vdash \letin{x}{e_1}{~e_2} :\tau_2$, where $\Gamma\vdash e_1:\tau_1$ and 
$\Gamma, x: \tau_1 \vdash e_2 : \tau_2$. Choose, according to the induction hypothesis appropriate 
representatives $\cval_1$ of $\sem{e_1}$ and $\cval_2$ of $\sem{e_2}$. If  
$\eta\Vdash^\Gamma_\w\gamma$ for some initial world $\w$ then we have (H1) $\semV{e_1}\eta\Vdash^{T\tau_1}_\w \cval_1(\gamma)$. 
 If $\semV{e_1}\eta(\max(\w) + 1)=\bot$ then $\cval_1(\gamma)=\bot$, too, and the same goes for the interpretation of the 
 entire let-construct. So suppose that $\semV{e_1}\eta(\max(\w)+1)=(n_1,\vval)$. By (H1), we must then 
 have $\cval_1(\gamma)=(\w_1,\val)$ where 
$\w\subseteq\w_1$ and $n_1=\max(\w_1)+1$ and $\vval\Vdash^{\tau_1}_{\w_1}\val$. 

By Lemma~\ref{lemma:contexts} we then have $\eta\Vdash^\Gamma_{\w_1} i.\gamma$ where $i:\w\hookrightarrow \w_1$. Thus, by the induction hypothesis, we get
 (H2) $\semV{e_2}(\eta[x{\mapsto}\vval]) \Vdash^{T\tau_2}_{\w_1} \cval_2(i.\gamma[x{\mapsto}\val])$. Thus, putting $\cval(\gamma)=\cval_2(i.\gamma,\val)$ furnishes the required representative of $\sem{\letin{x}{e_1}{e_2}}(\w)$.
\end{proof}
\begin{rem}
Note that the particular choice of representative matters here. For example, if 
$\cval_0\w = (\w\uplus \{\max(\w)+1,\max(\w)+2\},\max(\w)+1)$ then there exists $\cval_1$ such that $(\cval_0,\cval_1):1\rightarrow TN$ and $(\cval_0,\cval_1)$ and $\sem{\New}$ are equal qua morphisms of \svfs. Yet, $\semV{\New}\not\Vdash_\w^{TN}\cval_0$. 

It would have been an option to refrain from the identification of $\sim$-related morphisms. The formulation of the Fundamental Lemma would then have become slightly easier as we would have defined $\sem{e}$ so as to yield the required witnesses directly. On the other hand, the equational properties of the so
 obtained category  would be quite weak and in particular cartesian closure, monad laws, functor laws, etc would only hold up to $\sim$. This again would not really be a problem but prevent the use of standard category-theoretic terminology. 
\end{rem}
\subsection{Observational Equivalence}

\begin{defi}
  Let $\tau$ be a type.  We define an
  \emph{observation of type $\tau$} as a closed term  $\vdash o:\tau \rightarrow \booltype$. 
  Two values $\vval,\vval'\in\semV{\tau}$ are \emph{observationally equivalent at type $\tau$}
  if for all observations $o$ of type $\tau$ one has that $\semV{o}(\vval)(0)$ is
  defined iff $\semV{o}(\vval')(0)$ is defined and when
  $\semV{o}(v)(0)=(n_1,\vval_1)$ and $\semV{o}(v')(0)=(n_1',\vval_1')$
  then $\vval_1=\vval_1'$.
\end{defi}
Note that observational equivalence is a congruence since an observation can be extended by any englobing context. We also note that observational equivalence is the coarsest reasonable congruence. 

We now show how the proof-relevant semantics can be used to deduce observational equivalences. 
\begin{thm}[Observational equivalence]
\label{thm:obs-equivalence}\label{obseq}
If $\tau$ is a  type and $\vval\Vdash^\tau_{\emptyset} e$ and
$\vval'\Vdash^\tau_{\emptyset}e'$ with $e\sim e'$ in $\sem{\tau}\emptyset$ then $\vval$ and  $\vval'$ are
observationally equivalent at type $\tau$. 
\end{thm}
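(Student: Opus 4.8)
The plan is to fix an arbitrary observation $o$ and show it cannot separate $\vval$ from $\vval'$, by pushing the equality proof $p:e\sim e'$ through the denotation of $o$ and transferring the resulting semantic equality to the concrete semantics through the realizability relation $\Vdash$.

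\emph{Setting up the observation.} Fix $o$ with $\vdash o:\tau\to\booltype$. Applying the Fundamental Lemma (Theorem~\ref{fundi}) to $o$ in the empty context and at $\emptyset$, where the trivial environment realizes itself, we obtain a representative of $\sem{o}$ whose value at $\emptyset$ is some $\cval_o\in T\sem{\tau\to\booltype}\emptyset$ with $\semV{o}\Vdash^{T(\tau\to\booltype)}_\emptyset\cval_o$. Unfolding this at name supply $0=\max(\emptyset)+1$ gives a dichotomy. If $\cval_o=\bot$ then running $o$ diverges, so $\semV{o}(\vval)(0)=\bot=\semV{o}(\vval')(0)$ for \emph{all} values and this $o$ is harmless. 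Otherwise $\cval_o=(\w_o,F)$ with $F\in\sem{\tau\to\booltype}\w_o$, and running $o$ from supply $0$ returns a function $h\in\semV{\tau\to\booltype}$ with an advanced supply $n=\max(\w_o)+1$ such that $h\Vdash^{\tau\to\booltype}_{\w_o}F$. The decisive bookkeeping point is that the raw semantics threads the supply through the state monad, so evaluating the application of $o$ to $\vval$ from $0$ first runs $o$ (producing $h$ and advancing to $n$) and then runs $h(\vval)$ from $n$; hence $\semV{o}(\vval)(0)=h(\vval)(n)=h(\vval)(\max(\w_o)+1)$, which is exactly where $\Vdash^{T\booltype}_{\w_o}$ inspects a computation. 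The same holds for $\vval'$.

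\emph{Applying the function and transporting the proof.} By Lemma~\ref{lemma:contexts}, weakening along $\iota:\emptyset\hookrightarrow\w_o$ yields $\vval\Vdash^\tau_{\w_o}\iota.e$ and $\vval'\Vdash^\tau_{\w_o}\iota.e'$. Writing $G=(G_0,G_1)$ for the setoid morphism $F_0(\idd_{\w_o}):\sem{\tau}\w_o\to T\booltype\w_o$, so that $F_0(\idd_{\w_o},-)=G_0(-)$, the function clause of $h\Vdash^{\tau\to\booltype}_{\w_o}F$ at $\w_o\supseteq\w_o$ with the identity gives
\[
h(\vval)\Vdash^{T\booltype}_{\w_o}G_0(\iota.e)\qquad\mbox{and}\qquad h(\vval')\Vdash^{T\booltype}_{\w_o}G_0(\iota.e').
\]
Meanwhile, functoriality of $\sem{\tau}$ on proofs sends $p:e\sim e'$ to $\iota.p:\iota.e\sim\iota.e'$ in $\sem{\tau}\w_o$, and the proof-action $G_1$ of the morphism $G$ produces $G_1(\iota.p):G_0(\iota.e)\sim G_0(\iota.e')$ in $T\booltype\w_o$.

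\emph{Reading off the conclusion.} It remains to interpret an equality proof in $T\booltype\w_o$. Since proof sets are inhabited only between two $\bot$'s or between two non-$\bot$ elements, $G_1(\iota.p)$ forces $G_0(\iota.e)$ and $G_0(\iota.e')$ to be simultaneously $\bot$ or simultaneously non-$\bot$; in the latter case the proof is a triple $(x,x',r)$ whose third component $r$ lies in $\booltype\,\overline{\w}(x.\bval,x'.\bval')$ for the underlying booleans $\bval,\bval'$, and as $\booltype$ is the discrete constant \svf\ on $\bools$ the transition maps are identities and $r$ witnesses $\bval=\bval'$. Now transfer this along the two realizability facts: $h(\vval)(\max(\w_o)+1)=\bot\iff G_0(\iota.e)=\bot$, and symmetrically for $\vval'$, so both concrete runs are defined or both undefined; and when defined, $\Vdash^{T\booltype}_{\w_o}$ equates the concrete boolean of $h(\vval)$ with the abstract boolean $\bval$ of $G_0(\iota.e)$, and that of $h(\vval')$ with $\bval'$, whence the two concrete booleans coincide. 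Recalling $\semV{o}(\vval)(0)=h(\vval)(\max(\w_o)+1)$ and similarly for $\vval'$, this says exactly that $\semV{o}(\vval)(0)$ and $\semV{o}(\vval')(0)$ agree on definedness and value. As $o$ was arbitrary, $\vval$ and $\vval'$ are observationally equivalent at $\tau$.

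\emph{Main obstacle.} The substance sits in the last two paragraphs: arranging for the morphism structure of $F$ to carry the equality proof $p$ across (via $F_0(\idd_{\w_o})$ and its proof-action $G_1$), and then extracting from the definition of proofs in $T\booltype$ both the definedness dichotomy and the equality of booleans (using discreteness of $\booltype$). Just as essential, though a coherence check rather than a difficulty, is the identity $\semV{o}(\vval)(0)=h(\vval)(\max(\w_o)+1)$ aligning the supply at which an observation is read with the supply at which $\Vdash^{T\booltype}_{\w_o}$ tests a computation; the argument would break if the realizability relation probed the monadic computation at any other name supply.
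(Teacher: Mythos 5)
Your proof is correct and follows essentially the same route as the paper's: apply the Fundamental Lemma to the observation $o$, push the proof $p:e\sim e'$ through the (proof-action of the) denotation of $o$ to obtain $\sem{o}(e)\sim\sem{o}(e')$ in $T\sem{\booltype}$, and then read off co-termination and equality of booleans from the definitions of $\sim$ at $T\booltype$ (using discreteness of $\booltype$) and of $\Vdash^{T\booltype}$. The paper compresses the intermediate bookkeeping (the weakening along $\emptyset\hookrightarrow\w_o$, the unfolding of the computation relation, and the alignment of name supplies) into a ``by definition'', whereas you make it explicit; there is no substantive difference.
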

\begin{proof}
Let $o$ be an observation at type $\tau$. By the Fundamental Lemma (Theorem \ref{fundi}) we have  $\semV{o}\Vdash_\emptyset^{\tau\to \booltype}\sem{o}$. 

Now, since $e\sim e'$ we also have $\sem{o}(e)\sim\sem{o}(e')$ and, of course, 
$\semV{o}(\vval)\Vdash_\emptyset^{T\booltype}\sem{o}(e)$ and 
$\semV{o}(\vval')\Vdash_\emptyset^{T\booltype}\sem{o}(e')$. 

Knowing $\sem{o}(e)\sim\sem{o}(e')$\footnote{More precisely, we are using the representative of the equivalence class given by Theorem~\ref{fundi}.}, there are two cases. If $\sem{o}(e)(0)$ and  $\sem{o}(e')(0)$ both diverge, then the same is true for 
$\semV{o}(\vval)(0)$ and $\semV{o}(\vval')(0)$ by definition of $\Vdash^{T\booltype}$. Alternatively, if $\sem{o}(e)(0)=(\_,\_,b,\_)$ and $\sem{o}(e')(0)=(\_,\_,b',\_)$ for booleans $b,b'$ then, by definition of $\sim$ at $T\sem{\booltype}$ we get $b=b'$ and, again by definition of $\Vdash^{T\booltype}$, this then implies that 
$\semV{o}(\vval)(0)=(\_,b)$ and $\semV{o}(\vval)(0)=(\_,b')$ with $b=b'$, hence the claim. 
\end{proof}
\section{Direct-Style Proofs}\label{proffs}
\label{sec:examples}
We now have enough machinery to provide  direct-style proofs for equivalences involving
name generation.

If $\Gamma \vdash e : \tau$ and  $\Gamma \vdash e' : \tau$, we say the equation $\Gamma \vdash e = e' :\tau$ is \emph{semantically sound} if $\sem{e} = \sem{e'}$ are equal morphisms from $\sem{\Gamma}$ to $\sem{\tau}$. If $v = v'$ can be derived by sound equations and congruence rules, then $\sem{v}$ and $\sem{v'}$ are equivalent by Theorem~\ref{obseq}. We omit the formal definition of such derivations using an equational theory. We refer to \cite{benton14popl} for details on how this could be set up.  

From the categorical properties of setoids the soundness of $\beta, \eta$, fixpoint unrolling and similar equations is obvious. We now demonstrate the soundness of the more interesting equations involving name generation.


\subsection{Drop equation}
We start with the following equation, which eliminates a dummy allocation:
\[
   (\letin{x}{\New} e) = e,\quad \textrm{ provided $x$ is not free in $e$.}
\]
Formally we have $\Gamma \vdash e : \tau$ and, writing $\ccval$ for the LHS of the above equation, and $\ccval'$ for the RHS, the equation reads $\Gamma \vdash \ccval = \ccval' : \tau$. We have $\sem{\ccval'}\w(\gamma) = (\w_1,v)$ for some extension $\w_1$ of $\w$ and $v: \sem{\tau}\w_1$ and $\sem{c}\w(\gamma) = (\w_2,i.v)$ where $\w_2 = \w_1 \cup \{\max(\w_1) + 1\}$
and $i : \w_1 \hookrightarrow \w_2$. 

Now it remains to construct a proof of $(\w_1,v) \sim (\w_2,v) \in TA\w$, which should depend continuously on $\gamma$. To that end, we consider the following pullback square, where
the annotations above and below the square are just to illustrate in which 
world the semantic values are:
\begin{displaymath}
\vcenter{\xymatrix@C=1pc@R=0.6pc{
\sem{c'}\gamma & \val\\
& \w_1 \ar@{^{(}->}[dr]^i \\
\w \ar[ur] \ar[dr] & & \w_2\\
& \w_2\ar[ur]_{id}\\
\sem{c} \gamma & i.\val
}}
\end{displaymath}
Clearly we have $i.\val \sim id.i.\val$ and therefore the pullback above is a proof 
that $(\w_1,v) \sim (\w_2,v) \in TA\w$.

\subsection{Swap equation}
Let us now consider the following equivalence where the order in which the names are generated is 
switched:
\[
(\letin{x}{\New} \letin{y}{\New} e) = (\letin{y}{\New} \letin{x}{\New} e).
\]
Again, we write $\ccval$ for the LHS and $\ccval'$ for the RHS. Let $\cloc_1,\cloc_2,\cloc_1',\cloc_2'$ be the concrete locations allocated by the left-hand-side and right-hand-side of the equation. In fact, $\cloc_1 = \max(\w) + 1, \cloc_2 = \cloc_1 + 1$ and $\cloc_2' = \cloc_1$ and $\cloc_1' = \cloc_2$. We have $\sem{c}\gamma = (\w_2,v)$, where $\sem{e}(i.\gamma[x \mapsto \cloc_1, y \mapsto \cloc_2 ] ) = (\w_2,v)$. We also have $\sem{c'}\gamma$, where $\sem{e}(i'.\gamma[x \mapsto \cloc_2', y \mapsto \cloc_1' ] ) = (\w_2',v')$. 

Define $s(\cloc_1) = \cloc_2'$, $s(\cloc_2) = \cloc_1'$ and $s \upharpoonleft \w = id$. Naturality of $\sem{e}$,  \ie, $\sem{e} \circ \sem{\Gamma,x,y}s \sim T\sem{t}s\circ\sem{e}$ furnishes a co-span $x,x'$ so that $x.s_2.v \sim x'.v'$ and $xt = x'u_1'$ (III). Here $s_2,t$ is the completion of the span $s_1,u_1$ to a minimal pullback as contained in the definition of $T\sem{\tau}s$.
\begin{displaymath}
\vcenter{\xymatrix@C=1pc@R=0.6pc{
\sem{c}\gamma & & \sem{e}(i.\gamma[x \mapsto \cloc_1, y \mapsto \cloc_2 ] ) &&& v \\
&& \w \cup \{\cloc_1,\cloc_2\} \ar[dd]_{s_1} \ar@{^{(}->}[rrr]^{u_1} & &  & \w_2 \ar[ddr]_{s_2} &    \\
\w \ar@{^{(}->}[urr]^i\ar@{^{(}->}[drr]_{i'} & \quad \textrm{(I)} &  &  & \textrm{(II)} & &s_2.v    \\
&& \w \cup \{\cloc_1',\cloc_2'\} \ar@{^{(}->}[ddrrrr]_{u_1'} \ar@{^{(}->}[rrrr]_{t} & & & & \w_3 \ar[dr]^x\\
\sem{c}\gamma &&&&&&\textrm{(III)} & \w_4\\
 && \sem{e}(i'.\gamma[x \mapsto \cloc_2', y \mapsto \cloc_1' ] ) &&&& \w_2' \ar[ur]_{x'}\\
 &&&&&& v'  
}}
\end{displaymath}
Notice that the square (I) commutes by definition of $s_1$; the square (II) commutes because it is a minimal pullback. As a results the entire diagram commutes. $xs_2u_1$ and $x'u_1'$ is the proof that $\sem{c}\gamma \sim \sem{c'}\gamma$.

This is essentially the same proof as given by Stark \cite{Stark94thesis}, but now it also works in the presence of recursion.

\section{Proof-relevant parametric functors}
\label{sec:parametric}
The following equation (Stark's ``Equivalence 12'') 
cannot be validated in the functor category model and nor is it valid in the category of \svfs. 
\begin{equation}
(\letin{n}{\New} \vfun{x}{x=n}) = (\vfun{x}{\mfalse}).  
\label{eq:equiv-12}
\end{equation}
The above is, nevertheless, a valid contextual equivalence. The
intuition is that the name $n$ generated in the left-hand side is
never revealed to the context and is therefore distinct from any name
that the context might pass in as argument to the function; hence, the
function will always return $\mfalse$. To justify this equivalence,
Stark constructs a model based on traditional Kripke logical
relations. He also gives a category-theoretic version of that logical
relation using so-called parametric functors. In this section, we construct a
proof-relevant version of these parametric functors, which will allow
us to justify the above equivalence in the presence of recursion and
in direct style. In fact, this seems to be the first time that this
equivalence has been established in this setting; we are
not aware of an earlier extension of parametric functors to recursion.

We also show that the transition to proof relevance makes the induced
logical relation transitive, which is generally not the case for
ordinary Kripke logical relations.


\subsection{Spans of Worlds}

We use capital letters $S,S', \dots$ for spans of worlds.
If $S$ is the span $\spn{\w}{u}{\underline\w}{u'}{\w'}$ then we use the notations $S:\w\leftrightarrow \w'$ and $\w=\domL S$ (left domain), $\w'=\domR S$ (right domain), $\underline \w=\lop S$ (low point), $u=S.u$, $u'=S.u'$. For world $\w$ we denote $r(\w):\w\leftrightarrow\w$
 the identity span $\spn{\w}{\id}{\w}{\id}{\w}$. If $S:\w\leftrightarrow \w'$ then $s(S):\w'\leftrightarrow\w$ is given by $\spn{\w'}{S.u'}{\lop S}{S.u}{\w}$. If $S:\w\leftrightarrow \w'$ and $S':\w'\leftrightarrow \w''$ then we define $t(S,S'):\w\leftrightarrow \w''$ as $\lspn{\w}{S.u\ x}{.}{S'.u'\ x'}{\w''}$ where $x, x'$ complete $S.u'$ and $S'.u$ to a pullback square. 
\begin{displaymath}
\vcenter{\xymatrix@C=1pc@R=0.5pc{ 
& \w & & \w'& &  \w'' \\
& & \underline{\w_S} \ar[ul]\ar[ur] & & \underline{\w_{S'}} \ar[ur] \ar[ul] \\
& && \underline{\w_0} \ar@{.>}[ul]\ar@{.>}[ur]
}}
\end{displaymath}
We assume a fixed choice of such completions to pullback squares. We do not assume that the $t$-operation is associative or satisfies any other laws. 

\begin{defi}\label{parames}
A \emph{parametric square} consists of two spans $S:\w\leftrightarrow \w'$ and $S_1:\w_1\leftrightarrow \w_1'$ and two morphisms $u:\w\rightarrow \w_1$ and $u':\w'\rightarrow\w_1'$ such that there exists a morphism $\underline u$ making the two squares in the following diagram pullbacks (thus in particular commute). 
\begin{displaymath}
\vcenter{\xymatrix@C=1pc@R=1.3pc{ 
& \w \ar[rrrr]^{u} & &  & & \w_1  \\
{\lop S} \ar@{.>}[rrrr]_{\underline{u}} \ar[ur]\ar[dr] & &   & & {\lop{S_1}} \ar[ur] \ar[dr] \\
& \w' \ar[rrrr]^{u'} & & & & \w_1'
}}
\end{displaymath}

We use the notation $(u,u'):S\rightarrow S_1$ in this situation. 
\end{defi}
Note that the witnessing morphism $\underline u$ is uniquely determined since  we can complete $S_1$ to a pullback in which case $\underline u$ is the unique mediating morphism given by universal property of the latter pullback.

The reader is invited to check that under the interpretation of spans as partial bijections the presence of a parametric square $(u,u'):S\rightarrow S'$ asserts that $S'$ is obtained from $S$ by consistent renaming followed by the addition of  links and ``garbage''. In the following diagram the left diagram is parametric and the right one is not. 
In particular, the value 2 is mapped to 2 and 3 in the diagram to the right (as illustrated in red).
\begin{displaymath}
\vcenter{\xymatrix@C=0.2pc@R=1.3pc{ 
& \{0,1,2\} \ar[rrrr]^{u} & &  & & \{0,1,2,3\}  \\
\{0,1\} \ar[ur]\ar[dr]_{[0\mapsto 1,1\mapsto 2]~~ } & &   & & \{0,1,2\} \ar[ur]_{\quad [0\mapsto 0,1\mapsto 1,{2\mapsto 3}]} 
\ar[dr]^{\quad [0\mapsto 1,1\mapsto 2,2\mapsto 3]} \\
& \{0,1,2\} \ar[rrrr]^{u'} & & & & \{0,1,2,3\}
}}
\qquad
\vcenter{\xymatrix@C=0.2pc@R=1.3pc{ 
& \{0,1,2\} \ar[rrrr]^{u} & &  & & \{0,1,2,3\}  \\
\{0,1\} \ar[ur]\ar[dr]_{[0\mapsto 1,1\mapsto 2]~~ } & &   & & \{0,1,2\} \ar[ur]_{\quad [0\mapsto 0,1\mapsto 1,{\color{red} 2\mapsto 2}]} 
\ar[dr]^{\quad [0\mapsto 1,1\mapsto 2,{\color{red} 2\mapsto 3}]} \\
& \{0,1,2\} \ar[rrrr]^{u'} & & & & \{0,1,2,3\}
}}
\end{displaymath}
 We also note that if $S,S':\w\leftrightarrow \w'$ then
 $(\id,\id):S\rightarrow S'$ is a parametric square if and only if
 there exists an isomorphism $t:\lop S\rightarrow \lop{S'}$ such that
 $S'.u\ t=S.u$ and $S'.u'\ t = S.u'$.  In this case, we call $S$ and
 $S'$ isomorphic spans and write $S \cong S'$. Notice that for any
 span $S:\w\leftrightarrow\w'$ we have $t(S,r(\w')) \cong S \cong
 t(r(\w),\w')$ as well as other properties, such as associativity of
 $t(\cdot,\cdot)$ up to $\cong$.

\begin{defi}
\label{def:stark-parametric}
   A \emph{parametric functor} is a set-valued functor on the category
   of worlds (a set $A\w$ for each world $\w$ and functorial
   transition functions $Au: A\w\rightarrow A\w'$ when
   $u:\w\rightarrow \w'$) together with a relation $AS\subseteq
   A\w\times A\w'$ for each span $S:\w\leftrightarrow \w'$. It is
   required that $Ar(w)$ is the equality relation on $A\w$.
 \end{defi} 

\subsection{Parametric Setoid-Valued Functors}

Our aim is now to define a proof-relevant version of parametric
functors: parametric \svfs. One way to go about this would be to
identify a relational structure, i.e.~an internal reflexive graph in
the category of setoids, and then define parametric functors in the
way that outlined by Stark \cite[Section~4.3]{Stark94thesis}.
Here, we prefer to take
a different approach which gives a slightly richer
structure. Namely, we consider relations that admit composition and
taking opposites. This will lead to a logical relation that is
both transitive and symmetric, and it appears that
proof-relevance plays an important role in making that possible.

This construction could be given ``decent categorical
credentials'' \cite[Section 4.3]{Stark94thesis} by identifying
an internal bicategory (as opposed to a reflexive graph) in the
category of setoids, but we prefer to give a direct and elementary
presentation which has the additional advantage that we can economise
some data, namely the proof components of the relevant setoids, as
those can be recovered from the relational structure by taking the
identity extension property ($Ar(\w)=\textit{id}(A\w)$) as the very
definition of (proof-relevant!) equality on the cpo $A\w$.

So, just like an \svf, a parametric \svf, $A$, specifies a predomain
$A\w$ for each $\w$, and for each $u: \w \rightarrow \w'$ a continuous
function $Au : A\w \rightarrow A\w'$. This time, however, we have a
``heterogeneous equality'' allowing one to compare elements of two
different worlds without the need to transport them to a larger
common world as done in \svfs. Thus, a parametric \svf\ has for each
span $S: \w \leftrightarrow \w'$ and elements $a \in A\w, a'\in A\w'$,
a set of ``proofs'' $AS(a,a')$ asserting equality of these
elements. As in the case of \svfs, the set of tuples $(S,a,a',p)$
with $p \in AS(a,a')$ must carry a predomain structure. We also
require this semantic equality to be reflexive, symmetric and
transitive in an heterogeneous sense, thus employing the $r,s,t$
operations on spans defined above.  Furthermore, the transition
functions should behave functorially, as for \svfs, but this time in the
sense of the ``heterogeneous equality''. Every \svf\ gives rise to
a parametric \svf\ by instantiating the heterogeneous equality to the
larger world, but not all parametric \svfs\ are of this form (see
Example~\ref{myes}).

\begin{defi}\label{defpa}
A \emph{parametric \svf}, $A$, consists of the following data. 
\begin{enumerate}
\item For each world $\w$ a predomain $A\w$. 
\item For each $u:\w\rightarrow\w'$ a continuous function $Au: A\w\rightarrow A\w'$. We use the notation $u.a=Au(a)$. 
\item For each span $S:\w\leftrightarrow \w'$ and $a\in A\w$ and $a'\in A\w'$ a set $AS(a,a')$ such that the set of quadruples $(S,a,a',p)$ with $p\in AS(a,a')$ is a predomain with continuous second and third projections and discrete ordering in the first component. 
\item For each parametric square $(u,u'):S\rightarrow S'$ a continuous 
function 
\[
A(u,u') : \Pi a\in A{\domL S}.\Pi a'\in A{\domR S}.AS(a,a') \rightarrow 
AS'(u.a,u'.a')
\]
\item For each parametric square $(\id,\id):S\rightarrow S'$ a continuous function 
\[
A(S,S') : \Pi a\in A{\domL S}.\Pi a'\in A{\domR S}.AS(a,a') \rightarrow 
AS'(a,a')
\]
\item Continuous functions of the following types, witnessing reflexivity, symmetry and transitivity in the ``heterogeneous sense'': 
\[\begin{array}{l}
\Pi\w.\Pi a\in A\w.Ar(\w)(a,a)\\
\Pi S.\Pi a\in A\domL S.\Pi a'\in A\domL{S}.AS(a,a')\rightarrow As(S)(a',a)\\
\Pi \w\ \w'\ \w''.\Pi S:\w\leftrightarrow \w'.\Pi S':\w'\leftrightarrow\w''.
\Pi a\in A\w.\Pi a'\in A\w'.\Pi a''\in A\w''.\\
\qquad AS(a,a')\times AS'(a',a'')\rightarrow At(S,S')(a,a'')\\
\end{array}\]
\item Continuous functions of the following types, witnessing the functorial laws:
\[\begin{array}{l}
\Pi \w.\Pi a\in A\w.Ar(\w)(a,\id.a)\\
\Pi \w\ \w_1\ \w_2.\Pi u:\w\rightarrow\w_1.\Pi v:\w_1\rightarrow\w_2.
Ar(\w_2)(v.u.a,(vu).a)
\end{array}\]
\end{enumerate}
\end{defi}

By way of motivating axiom (5), suppose that $S,S':\w\leftrightarrow\w'$ are isomorphic spans between $\w$
and $\w'$ in the sense that $(\id,\id):S\rightarrow S'$ where $t$ is the isomorphism
associated to $(\id,\id)$. If we have (an element of) $AS(a,a')$, then
axiom (4) suffices to establish $A(\id,\id)(a,a')\in
AS'(\id.a,\id.a')$. But what we really want is (an element of)
$AS'(a,a')$, which is just what axiom (5) provides in the shape of
$A(S,S')(a,a')$. Without axiom (5), one can get quite close: using
axioms (6) and (7) one can get (an element of)
$At(r(\w),t(S',s(r(\w))))(a,a')$, for example. But without explicitly
postulating axiom (5), as we do, or 
making extra assumptions on the $t(-,-)$ or
$\id.-$ operations, it seems impossible to reach $AS'(a,a')$. The following lemma is an instance of Axiom (5):

\begin{lem}
\label{lem:iso-spans}
  If $S \cong S'$ are isomorphic spans over $\w,\w'$, there is a continuous function of type:
   \[
     \Pi a\in A\w. \Pi a'\in A \w'. AS(a,a') \rightarrow AS'(a,a')
   \]
\end{lem}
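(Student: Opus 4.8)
The plan is to observe that the lemma is nothing more than a direct application of axiom~(5) from Definition~\ref{defpa}, once we unfold what $S \cong S'$ actually means. The whole content of the argument lies in recognising that the hypothesis already hands us a parametric square of the required shape.

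First I would recall the characterisation of isomorphic spans established in the remark just before Definition~\ref{defpa}: for two spans $S, S' : \w \leftrightarrow \w'$, the statement $S \cong S'$ means \emph{precisely} that $(\id, \id) : S \to S'$ is a parametric square, \ie\ there is an isomorphism $t : \lop{S} \to \lop{S'}$ with $S'.u\,t = S.u$ and $S'.u'\,t = S.u'$. Hence the hypothesis $S \cong S'$ of the lemma already furnishes a parametric square $(\id, \id) : S \to S'$, with no further construction needed.

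Next I would instantiate axiom~(5) with this parametric square. Since $\domL{S} = \w$ and $\domR{S} = \w'$, axiom~(5) provides exactly a continuous function
\[
A(S, S') : \Pi a \in A\w.\ \Pi a' \in A\w'.\ AS(a, a') \rightarrow AS'(a, a'),
\]
which is the map asserted by the lemma. Continuity is part of what axiom~(5) guarantees, so nothing additional has to be verified.

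There is essentially no obstacle here: the only step with any content is matching the notion $\cong$ of isomorphic spans with the existence of the parametric square $(\id, \id) : S \to S'$, after which axiom~(5) applies verbatim. This is exactly the sense in which the statement is flagged as ``an instance of Axiom~(5)'', and it is worth recording separately only because, as the motivating discussion after Definition~\ref{defpa} shows, one cannot reach $AS'(a,a')$ from axioms~(4), (6) and~(7) alone without postulating (5).
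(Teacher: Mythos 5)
Your proof is correct and matches the paper exactly: the paper states this lemma as ``an instance of Axiom (5)'' with no further argument, and your unfolding of $S \cong S'$ into the parametric square $(\id,\id):S\rightarrow S'$ followed by a direct application of axiom~(5) is precisely the intended justification.
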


\begin{lem}\label{luf}
Let $A$ be a parametric \svf. We have a continuous function of type:
\[
  \Pi a \in A\w. \Pi \w_1. \Pi u: \w \rightarrow \w_1.AS(a,u.a)
\]
where $S$ is $\spn{\w}{\id}{\w}{u}{\w_1}$. 
\end{lem}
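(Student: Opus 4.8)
The plan is to produce the required proof of $AS(a,u.a)$ by transporting the canonical reflexivity proof at the identity span $r(\w)=\spn{\w}{\id}{\w}{\id}{\w}$ along a suitable parametric square into $S=\spn{\w}{\id}{\w}{u}{\w_1}$, and then to correct for the fact that a parametric \svf{} preserves identities only up to heterogeneous equality (axiom~(7) of Definition~\ref{defpa}).

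First I would exhibit a parametric square $(\id,u):r(\w)\rightarrow S$. With left leg $\id:\w\to\w$, right leg $u:\w\to\w_1$ and mediating map $\underline u=\id:\lop{r(\w)}\to\lop{S}$, the upper square of Definition~\ref{parames} consists entirely of identities and is trivially a pullback. The lower square has its top-left, top-right and bottom-left corners all equal to $\w$ and its bottom-right corner equal to $\w_1$, with identities as its top and left edges and $u$ as its right and bottom edges, so that $u\circ\id=u\circ\id$. The hard part will be checking that this square is a pullback: it is the pullback of the cospan $\w\stackrel{u}{\to}\w_1\stackrel{u}{\leftarrow}\w$, and since every morphism in a category of worlds is a monomorphism (Section~\ref{sec:pullback}), that pullback is the diagonal — which is precisely the corner $\w$ equipped with its two identity legs. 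Hence $(\id,u)$ really is a parametric square.

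With the square in hand, I would feed the reflexivity proof $r_A(a)\in Ar(\w)(a,a)$ (axiom~(6)) through the transport map of axiom~(4), obtaining a proof
\[
p \;:=\; A(\id,u)(a,a,r_A(a)) \;\in\; AS(\id.a,\,u.a).
\]
It then remains to replace $\id.a$ by $a$ on the left. Axiom~(7) supplies a proof $q\in Ar(\w)(a,\id.a)$, and transitivity (axiom~(6)) composes $q$ with $p$ to give a proof in $At(r(\w),S)(a,u.a)$. Since $t(r(\w),S)\cong S$ by the remark following Definition~\ref{parames}, Lemma~\ref{lem:iso-spans} transports this to the desired element of $AS(a,u.a)$.

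Finally, for continuity I would observe that $\w$, $\w_1$ and $u$ enter only as discrete parameters: the chosen square $(\id,u)$ and the span isomorphism $t(r(\w),S)\cong S$ depend on them but not on $a$. It therefore suffices that the construction is continuous in $a$, which holds because it is a composite of the continuous reflexivity map $r_A$ with the continuous functions provided by axioms~(4), (6) and (7) and the continuous transport of Lemma~\ref{lem:iso-spans}.
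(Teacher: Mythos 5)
Your proposal is correct and follows essentially the same route as the paper, whose entire proof is the single observation that $(\id,u):r(\w)\rightarrow S$ is a parametric square; you have simply spelled out the transport of the reflexivity proof along that square via axiom~(4), plus the bookkeeping (axiom~(7), transitivity, and Lemma~\ref{lem:iso-spans}) needed to replace $\id.a$ by $a$, which the paper leaves implicit.
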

\begin{proof}
We use the parametric square $(\id,u):S_0\rightarrow S$ where $S_0$ is $\spn{\w}{\id}{\w}{\id}{\w}$. 
\end{proof}
Every parametric \svf\ also is a plain \svf\ where we just define
$A\w(a,a')=Ar(\w)(a,a')$ and quotient the transition maps $Au$ by
pointwise $\sim$-equivalence.

But also every \svf\ $A$ can be extended to a parametric \svf:
first fix a particular choice of transition functions
$Au:A\w\rightarrow A\w'$ when $u:\w\rightarrow\w'$. Now define
$AS(a,a')=A\overline\w(x.a,x'.a')$ where $\overline \w$ is the apex of a
completion of $S$ to a minimal pullback and $x:\domL S\rightarrow
\overline \w$, $x':\domR S\rightarrow \overline \w$ are the
corresponding maps.

However, this correspondence is not one-to-one. For a concrete
counterexample, consider the following example which also lies at the
heart of the justification of ``Equivalence 12'' with parametric
functors.

\begin{exa}\label{myes}
The parametric \svf\ $[N{\Rightarrow}B]$ is defined by $[N{\Rightarrow}B]\w=2^\w$ (functions from $\w$ to $\{\mtrue,\mfalse\}$) and 
\[
 [N{\Rightarrow}B]S(f,f')=\begin{cases}
\{\star\}&\mbox{if $\forall n\in \lop S.f(S.u(n))=f'(S.u'(n))$}\\
\emptyset&\mbox{otherwise}
\end{cases}
\]
Now, let $S$ be $\spn{\{0\}}{}{\emptyset}{}{\emptyset}$ and put $f(x)=$``$x{=}0$'' and $f'(x)=\mfalse$. We have $[N{\Rightarrow}B]S(f,f')$, \ie, $[N{\Rightarrow}B]S(f,f')=\{\star\}$, thus $f$ and $f'$ are considered equal above span $S$. On the other hand, if we complete $S$ to a minimal pullback by $\{0\}\stackrel{1}{\rightarrow}\{0\}\stackrel{x'}{\leftarrow}\emptyset$ then $[N{\Rightarrow}B]r(\{0\})(f,x'.f)=\emptyset$, \ie, $f$ and $f'$ are not equal when regarded over the least common world, namely $\{0\}$. 
\end{exa}
\begin{defi}
A \emph{parametric natural transformation}, $f$, from parametric \svf\ $A$ to $B$ consists of two continuous functions 
\[\begin{array}{l}
 f_0 : \Pi\w.A\w\rightarrow B\w\\
 f_1 : \Pi S.\Pi a\in\domL S.\Pi a':\domR S.AS(a,a') \rightarrow BS(f_0{\domL S}(a),
f_0{\domR S}(a'))
\end{array}\]
As usual we refer both $f_0$ and $f_1$ as $f$.
Two parametric natural transformations $f,f':A\rightarrow B$ are identified if there is a continuous function of type
\[
\Pi \w.\Pi a\in A\w.Br(\w)(f\w(a),f'\w(a))
\]
\end{defi}
The identification of ``pointwise equal'' parametric natural transformations is meaningful as follows: 
\begin{lem}
Let $f$ and $f'$ be representatives of the same parametric natural transformation $A\rightarrow B$. There then is a continuous function of the following type:
\[
\Pi S.\Pi a\in\domL S.\Pi a'\in\domR S.AS(a,a')\rightarrow BS(f\domL S(a),f'\domR S(a'))
\]
\end{lem}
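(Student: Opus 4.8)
The plan is to glue together the naturality witness of one representative with the pointwise-equality witness relating the two representatives, using heterogeneous transitivity on $B$, and then to correct the resulting span by an isomorphism. Let $\mu:\Pi\w.\Pi a\in A\w.Br(\w)(f\w(a),f'\w(a))$ be the continuous function witnessing that $f$ and $f'$ are identified as parametric natural transformations. Fix a span $S:\w\leftrightarrow\w'$, elements $a\in A\w$ and $a'\in A\w'$, and a proof $p\in AS(a,a')$; the task is to produce an element of $BS(f\w(a),f'\w'(a'))$.

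First I would apply the naturality component $f_1$ of the representative $f$ to obtain $q:=f_1(S,a,a',p)\in BS(f\w(a),f\w'(a'))$, a proof using $f$ on \emph{both} sides. Next I apply $\mu$ at the world $\w'$ to the element $a'$, obtaining $m:=\mu(\w',a')\in Br(\w')(f\w'(a'),f'\w'(a'))$, which converts the right endpoint from $f$ to $f'$. I then combine $q$ and $m$ by the transitivity operation for $B$ (item~(6) of Definition~\ref{defpa}), taking first span $S$ and second span $r(\w')$, with middle element $f\w'(a')\in B\w'$; this yields a proof in $Bt(S,r(\w'))(f\w(a),f'\w'(a'))$. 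It is essential that $\mu$ be taken at $\w'$ rather than at $\w$, so that the middle world of the transitivity operation is $\w'$ and the endpoints line up.

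Finally I would transport this proof from the span $t(S,r(\w'))$ back to $S$. Since $t(S,r(\w'))\cong S$, as noted after Definition~\ref{parames}, Lemma~\ref{lem:iso-spans} (an instance of axiom~(5)) supplies a continuous function $Bt(S,r(\w'))(b,b')\rightarrow BS(b,b')$ for all $b\in B\w$ and $b'\in B\w'$; applying it with $b=f\w(a)$ and $b'=f'\w'(a')$ delivers the required element of $BS(f\w(a),f'\w'(a'))$. Continuity of the whole construction follows at once, since it is a composite of the continuous maps $f_1$, $\mu$, the transitivity witness, and the map from Lemma~\ref{lem:iso-spans}. There is no serious obstacle here: the only points requiring care are the bookkeeping of worlds and span endpoints needed to invoke transitivity, and the recognition that correcting $t(S,r(\w'))$ to $S$ is precisely where axiom~(5) (via Lemma~\ref{lem:iso-spans}) is indispensable. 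A symmetric construction, applying $\mu$ at $\w$, using $f'_1$, and transitivity with first span $r(\w)$ and $t(r(\w),S)\cong S$, works equally well and agrees with the above up to the pointwise identification on $B$.
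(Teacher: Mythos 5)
Your proof is correct and follows exactly the paper's own argument: apply $f_1$ to get a proof over $S$ with $f$ on both sides, use the pointwise identification $\mu$ at $\domR S$ to switch the right endpoint to $f'$, compose by heterogeneous transitivity, and transport along $t(S,r(\domR S))\cong S$ via Lemma~\ref{lem:iso-spans}. Your write-up is in fact more explicit than the paper's about where the middle world sits and why axiom~(5) is needed for the final span correction.
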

\begin{proof}
Given $S$, $a$, $a'$, and $p\in AS(a,a')$ we obtain $BS(f\domL{S}(a),f\domR{S}(a'))$ and we also obtain
$Br(\domR S)(f\domR S(a'),f\domR S(a'))$ since $f$ and $f'$ are pointwise equal. We conclude by transitivity and Lemma~\ref{lem:iso-spans}.
\end{proof}

\begin{lem}
If $f:A\rightarrow B$ is a parametric natural transformation then there are continuous functions of the following types
\[\begin{array}{l}
\Pi \w\ \w_1.\Pi u:\w\rightarrow \w_1.\Pi a\in A\w.Br(\w_1)(u.f\w(a),f\w_1(u.a))\\
\Pi\w.\Pi a\ a'\in A\w.Ar(\w)(a,a')\rightarrow Br(\w)(f\w(a),f\w(a'))
\end{array}\]
\end{lem}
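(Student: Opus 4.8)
The plan is to read both required functions off the single datum $f_1$ of the parametric natural transformation, tidying up with the heterogeneous reflexivity/symmetry/transitivity operations and the auxiliary Lemmas~\ref{luf} and~\ref{lem:iso-spans}.

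The second function is immediate. Instantiate $f_1$ at the identity span $r(\w)$: since $\domL{r(\w)}=\domR{r(\w)}=\w$ and both legs of $r(\w)$ are identities, $f_1$ specialises to a continuous map $Ar(\w)(a,a')\rightarrow Br(\w)(f\w(a),f\w(a'))$, which is exactly what is asked.

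For the first function (naturality up to heterogeneous equality) I would argue as follows. Write $S$ for the span $\spn{\w}{\id}{\w}{u}{\w_1}$. Lemma~\ref{luf} applied to $A$ yields a proof in $AS(a,u.a)$, and feeding this through $f_1$ produces a proof in $BS(f\w(a),f\w_1(u.a))$. Separately, Lemma~\ref{luf} applied to $B$ at the element $f\w(a)$ yields a proof in $BS(f\w(a),u.f\w(a))$. Thus both $u.f\w(a)$ and $f\w_1(u.a)$ are heterogeneously equal, over the \emph{same} span $S$, to the common left value $f\w(a)$. Applying symmetry to the second proof and then transitivity composes the two into a single proof relating $u.f\w(a)$ and $f\w_1(u.a)$ over the span $t(s(S),S)$. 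What remains is to turn this into a proof over $r(\w_1)$, for which Lemma~\ref{lem:iso-spans} (an instance of axiom~(5)) is the intended tool, once the relevant spans have been matched up.

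The step I expect to be the main obstacle is precisely this last conversion. The composite $t(s(S),S)$ has low point $\w$ and both legs equal to $u$; under the partial-bijection reading it is the identity relation on the image $u(\w)\subseteq\w_1$, whereas $r(\w_1)$ is the identity on all of $\w_1$. Hence the two proofs constructed above only certify that $u.f\w(a)$ and $f\w_1(u.a)$ agree over the low point, i.e.\ over $u(\w)$, and upgrading this to genuine $r(\w_1)$-equality over the whole of $\w_1$ is the real content. This is where the support of the transported value must be exploited: the names in $\w_1\setminus u(\w)$ are fresh for $u.a$, so relating $u.a$ to its renamings that fix $u(\w)$ and pushing these relations through $f_1$ (again cleaning up with symmetry, transitivity and Lemma~\ref{lem:iso-spans}) should pin down the value of $f\w_1(u.a)$ on the fresh names and match it with the canonical transport $u.f\w(a)$. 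Proof-relevance is essential throughout, since it is the explicit proof components that carry the witnessing (co)spans through these manipulations.
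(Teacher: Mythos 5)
Your treatment of the second function is exactly right: instantiating $f_1$ at the identity span $r(\w)$ gives the map $Ar(\w)(a,a')\rightarrow Br(\w)(f\w(a),f\w(a'))$ directly. For the first function you also begin exactly as the paper does --- Lemma~\ref{luf} gives an element of $AS(a,u.a)$ for $S=\spn{\w}{\id}{\w}{u}{\w_1}$, and $f_1$ turns it into an element of $BS(f\w(a),f\w_1(u.a))$ --- but at that point you take a wrong turn. The step you are missing is a single application of axiom~(4) of Definition~\ref{defpa}: the pair $(u,\id)$ is a parametric square $S\rightarrow r(\w_1)$ (with $\underline{u}=u$; both squares are pullbacks because $u$ is mono), and $B(u,\id)$ transports the proof in $BS(f\w(a),f\w_1(u.a))$ directly to one in $Br(\w_1)(u.f\w(a),\id.f\w_1(u.a))$, which the functorial laws convert to the desired $Br(\w_1)(u.f\w(a),f\w_1(u.a))$. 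No transitivity is needed.

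Your alternative route --- producing a second proof in $BS(f\w(a),u.f\w(a))$ via Lemma~\ref{luf} for $B$ and composing by symmetry and transitivity --- lands, as you correctly observe, on the span $t(s(S),S)=\spn{\w_1}{u}{\w}{u}{\w_1}$, the partial identity on $u(\w)$. This span is \emph{not} isomorphic to $r(\w_1)$ unless $u$ is surjective, so Lemma~\ref{lem:iso-spans} does not apply, and there is in fact no parametric square $(\id,\id):t(s(S),S)\rightarrow r(\w_1)$ (the relevant square with sides $u,\id,u,\id$ fails to be a pullback when $u$ is not an isomorphism). The rescue you then propose --- a freshness/support argument on the names in $\w_1\setminus u(\w)$ --- is not licensed by the axioms of a parametric \svf: Definition~\ref{defpa} imposes no pullback-preservation or support condition, so there is no principle letting you ``pin down'' $f\w_1(u.a)$ on the fresh names. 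The gap is real, but it disappears entirely once you use the parametric square $(u,\id):S\rightarrow r(\w_1)$ instead of composing two proofs over $S$.
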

\begin{proof}
Fix $u$, $a$ and $\w$. Lemma~\ref{luf} furnishes an element of $AS(a,u.a)$ where 
$S$ is $\spn{\w}{1}{\w}{u}{\w_1}$. Since $f$ is a parametric natural transformation, we then get an element of $BS(f\w(a),f\w_1(u.a))$. We then get the desired element of 
$Br(\w_1)(u.f\w(a),f\w_1(u.a))$ by applying parametricity of  $B$ to the parametric square $(u,1):S\rightarrow r(\w_1)$. 
\end{proof}
\begin{thm}
The parametric \svfs\ with parametric natural transformations form a cartesian closed category with fixpoint operator obeying the laws from Theorems~\ref{tfixit} \& \ref{fixte}. There is a strong monad $T$ on this category where 
\[\begin{array}{l}
TA\w = \{(\w_1,a)\mid \w\subseteq \w_1\wedge a\in A\w_1\}_\bot\\
TAS((\w_1,a), (\w_1',a')) = \{(S_1:\w_1\leftrightarrow\w_1',p) | 
(\w{\hookrightarrow}\w_1,\w'{\hookrightarrow}\w_1'):S\rightarrow S_1 \wedge p\in AS'(a,a')\}
\end{array}\]
\end{thm}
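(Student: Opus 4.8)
The plan is to construct each piece of structure explicitly — componentwise where possible, by a Kripke/logical-relations recipe for the exponential — and to obtain the fixpoint and the monad by lifting the setoid-level constructions of Theorems~\ref{tfixit} and~\ref{fixte} and the non-parametric allocation monad. One might hope to present parametric \svfs\ as setoid-valued functors on a single indexing category built from worlds and spans and then invoke the general fact that $\Dscr^{\Cscr}$ is cartesian closed whenever $\Dscr$ is complete and cartesian closed; but the heterogeneous equality, the non-strict composition $t(-,-)$ of spans, and especially the isomorphic-span coherence axiom~(5) of Definition~\ref{defpa} obstruct any clean such presentation, so I would give the constructions by hand, mirroring the explicit treatment already used for \svfs.

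\emph{Finite products and exponentials.} Products and the terminal object are pointwise: $(A\times B)\w=A\w\times B\w$ and $(A\times B)S((a,b),(a',b'))=AS(a,a')\times BS(b,b')$, with every datum of Definition~\ref{defpa} — the transitions, the $r,s,t$ witnesses, the parametric-square action $A(u,u')$, the coherence $A(S,S')$, and the functor-law witnesses — defined separately in each component; the verifications are routine. For the exponential $A\Rightarrow B$ I would take $|(A\Rightarrow B)\w|$ to consist of Kripke-indexed families assigning to each $u:\w\rightarrow\w_1$ a map $A\w_1\rightarrow B\w_1$ together with proof-transport data, exactly as in the \svf\ function space, with the transition $u.f$ given by reindexing. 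The genuinely new ingredient is the span-indexed proof set: for $S:\w\leftrightarrow\w'$ I would declare $(A\Rightarrow B)S(f,f')$ to consist of the continuous functions witnessing, for every parametric square $(v,v'):S\rightarrow S_1$ and all $a\in A\domL{S_1}$, $a'\in A\domR{S_1}$ with a proof in $AS_1(a,a')$, a proof in $BS_1$ relating $f(v)(a)$ and $f'(v')(a')$. This is the logical-relations reading of ``related functions send related arguments to related results'', and the evaluation and currying morphisms are then defined as in the setoid and \svf\ cases.

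\emph{Fixpoint and monad.} For a pointed parametric \svf\ $D$ I would define $Y$ pointwise by $Y\w(f)=Y(f_0(\id_\w))$ using the setoid fixpoint of Theorem~\ref{tfixit}. Naturality with respect to transitions follows from the ``Uniformity'' law, using strictness of $Du$, exactly as in the proof of Theorem~\ref{fixte}, while the span component — that related functions have related fixpoints over a span $S$ — is obtained by the inductive chain-of-proofs construction used in Theorem~\ref{tfixit} to witness that $Y$ is a morphism, now carried out relative to the span-indexed proofs $DS$. The equational laws are inherited as in Theorem~\ref{fixte}. For the monad, $TA$ has underlying predomain and transition maps $TAu$ defined precisely as for the non-parametric allocation monad, by completing the span of an inclusion and $u$ to a minimal pullback; the span-proofs $TAS$ are as displayed, a proof being a span $S_1$ between the two extensions that restricts to $S$ along the inclusions via a parametric square, together with an $A$-proof over $S_1$. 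Reflexivity takes $S_1=r(\w_1)$, symmetry flips the span to $s(S_1)$, and transitivity composes the witnessing spans with $t(-,-)$; the unit $a\mapsto(\w,a)$, multiplication and strength are copied from the non-parametric proposition and checked to preserve span-proofs.

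\emph{Main obstacle.} I expect the exponential to be the crux. Unlike products, its span-indexed proof sets are defined by quantifying over all parametric squares extending $S$, so one must check that these sets carry a predomain structure with the required continuous projections and then verify every clause of Definition~\ref{defpa} for $A\Rightarrow B$ — in particular symmetry and transitivity of the heterogeneous equality on function spaces, and the isomorphic-span coherence axiom~(5), which the earlier discussion showed cannot be derived from the $r,s,t$ and functor witnesses alone and so must be established directly from the defining quantification. A secondary difficulty is checking that the relational action of $T$ is well defined: that composing two witnessing spans over extensions and restricting yields a parametric square over the composite base span, which rests on the same pullback manipulations already used to make $TAu$ a morphism. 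Once these two points are settled, the adjunction isomorphism for $\Rightarrow$ and the monad and strength laws hold up to pointwise $\sim$ by the same bookkeeping as in the non-parametric development.
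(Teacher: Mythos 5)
Your proposal is correct and follows essentially the same route as the paper, whose own proof is only a sketch concentrating on the one genuinely novel point: transitivity of the monad's span-indexed proofs, obtained by composing the two witnessing spans $S_1,S_1'$ with $t(-,-)$ and applying $A$'s heterogeneous transitivity --- exactly the construction in your monad paragraph, and the paper's accompanying remark makes the same observation you do, that this works only because the extensions $\w_1$ are part of the data rather than existentially quantified. The remaining structure (pointwise products, a Kripke-style exponential, the pointwise fixpoint inherited via Uniformity) is left implicit by the paper, and your explicit plan for it, including flagging axiom~(5) for the exponential as the delicate verification, is consistent with the paper's development.
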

\begin{proof}[Proof (sketch)]
The interesting bit is the proof of transitivity for the monad, which
seems to rely in an essential manner on proof relevance. Suppose that $S:\w\leftrightarrow \w'$ and $S':\w'\leftrightarrow\w''$ and that $(\w_1,a)\in TA\w$ and $(\w_1',a')\in TA\w'$ and $(\w_1'',a'')\in TA\w''$. Furthermore, suppose that 
$(S_1,p)\in TAS((\w_1,a),(\w_1',a'))$ and $(S_1',p')\in TAS'((\w_1',a'),(\w_1'',a''))$. 

Now, by definition, we have $S_1:\w_1\leftrightarrow \w_1'$ and $S_1':\w_1'\leftrightarrow \w_1''$ and also $p\in AS_1(a,a')$ and $p'\in AS_1'(a',a'')$. We thus obtain (an element of) $At(S_1,S_1')(a,a'')$ and this, together with $t(S_1,S_1')$ furnishes the required proof.
\end{proof}
\begin{rem}
Notice that if the extensions $\w_1$ were existentially quantified, as
in more traditional non-proof-relevant formulations of Kripke logical
relations (e.g.~that of Stark
\cite[Section 4.1]{Stark94thesis}), then
the transitivity construction in the above proof would not have been
possible because we would have no guarantee that the existential
witnesses used in the two assumptions are the same.
\end{rem}

 \subsection{Private Name Equation}
We now return to our motivating equivalence, illustrating that a
function value may encapsulate a freshly generated name without
revealing it to the context:
\[
 (\letin{n}{\New} \lambda x. (x = n) ) = (\lambda x. \mfalse).
\]
Writing $\ccval$ and $\ccval'$ for the LHS and RHS of the above,
the equivalence proof is based on the following diagram:
\begin{displaymath}
\vcenter{\xymatrix@C=1pc@R=0.6pc{
& \cval &&&& \lambda x. (x = \cloc) &&&& x = \cloc\\
& \emptyset  \ar[rrrr] & & & & \{\cloc\} \ar[rrrr] & & & & \{\cloc\} \cup X \cup G  \\
\emptyset \ar@{.>}[rrrr] \ar[ur]\ar[dr] &&  & & \emptyset\ar@{.>}[rrrr] \ar[ru] \ar[rd] & & & & X \ar[ur] \ar[dr] & &   \\
& \emptyset  \ar[rrrr] & & & & \emptyset  \ar[rrrr] & & & & X \cup G' \\
& \cval' &&&& \cval'=\lambda x. \mfalse &&&& \mfalse
}}
\end{displaymath}
We show that $\cval$ and $\cval'$ are equivalent in the trivial span to the left. For the generation 
of the fresh value in $\cval$, we choose the extension of the worlds with the fresh value $\cloc$, the second span 
 shown in the diagram. Now it remains to prove that $\lambda x. x = \cloc$ and $\cval'$ are equivalent above the latter. 
This means that for any extension of worlds, $x = \cloc$ and $\mfalse$ should be related. 
Consider the extension of worlds in the right-most span in the diagram above.
The names in $X$ denote the 
common names, while $G$ and $G'$ the spurious names created. Notice that $\cloc$ is not in the low point of the third span
because the squares with vertices $\emptyset, X, \{\cloc\} \cup X \cup G, \{\cloc\}$ and $\emptyset, X,  X \cup G', \emptyset$ 
are pullbacks as by Definition~\ref{parames}. 
Thus, the value of $x$ cannot be $\cloc$ and $x = \cloc$ is indeed equal to $\mfalse$.

\section{Discussion}
We have introduced proof-relevant logical relations and shown how they
may be used to model and reason about simple equivalences in a
higher-order language with recursion and name generation. A key
innovation compared with previous functor category models is the use
of functors valued in setoids (which are here also built on
predomains), rather than plain sets. One payoff is that we can work
with a direct style model rather than one based on continuations
(which, in the absence of control operators in the language, is less
abstract).

The technical machinery used here is not \emph{entirely} trivial, and
the reader might be forgiven for thinking it slightly excessive for
such a simple language and rudimentary equations. However, our aim has
not been to present impressive new equivalences, \squelch{(though less
  trivial ones do hold, even in this simple case),} but rather to
present an accessible account of how the idea of proof relevant
logical relations works in a simple setting. The companion paper
\cite{benton14popl} gives significantly more advanced examples of
applying the construction to reason about equivalences justified by
abstract semantic notions of effects and separation, but the way in
which setoids are used is there potentially obscured by the details
of, for example, much more sophisticated categories of worlds.  Our
hope is that this account will bring the idea to a wider audience,
make the more advanced applications more accessible, and inspire
others to investigate the construction in their own work.

Thanks to Andrew Kennedy for numerous discussions, to the
referee who first suggested that we write up the details of how
proof-relevance applies to pure name generation, and to the referees
of the present paper for their many helpful suggestions.

\newcommand{\etalchar}[1]{$^{#1}$}

\end{document}

\section{Comparison with FM sets and domains} It is well-known that
Gabbay-Pitts' FM-sets \cite{DBLP:journals/fac/GabbayP02} are equivalent
to pullback-preserving functors from our category of worlds $\world$
to the category of sets. Likewise, Pitts and Shinwell's FM-domains are
equivalent to pullback-preserving functors from $\world$ to the
category of domains, thus corresponding exactly to the
pullback-preserving discrete \svfs.

In this section we sketch the definition of a proof-relevant version
of FM-sets or rather FM-domains which provides an equivalent
formulation of the category of pullback-preserving s.v.f. The
construction rather straightforwardly generalises the equivalence
between pullback-preserving set-valued functors and FM-sets which we
now recall.
\newcommand{\Fix}{\textrm{Fix}}
\subsection{FM-sets}
 A finitely supported permutation is a bijection $\sigma:\mathbb{N}
 \rightarrow \mathbb{N}$ such that $\sigma(n)\neq n$ for only finitely
 many $n$. The set of those $n$ is called the support of $\sigma$.
 All permutations will be finitely supported henceforth. We write
 $\Fix(\sigma)=\{n\mid \sigma(n)=n\}$ for the set of fixpoints of
 $\sigma$. It is a cofinite set. We say $\sigma$ fixes a set of names
 $S$ if $\Fix(\sigma)\subseteq S$. \todo{Other way round, surely?} An FM-set is a set $X$ together
 with for each (finitely supported) permutation $\sigma$ a function
 $\sigma.- : X \rightarrow X$ defining an \emph{action},
 i.e. $\textrm{id}.x=x$ and $\sigma.\tau.x = (\sigma\circ \tau).x$.
 Moreover, for each $x\in X$ there must be a finite set of names $S$
 such that if $\sigma$ fixes $S$ then $\sigma.x=x$. There exists a
 unique least such set $S$ (see Proposition~3.4 of \cite{gabbaypitts})
 called the \emph{support} of $x$ and written $\supp(x)$. 

 A morphism from FM-set $X$ to FM-set $Y$ is a function
 $f:X\rightarrow Y$ satisfying $f(\sigma.x)=\sigma.f(x)$ for all
 $x,\sigma$. The thus obtained category is cartesian closed and in
 fact equivalent to the category of pullback-preserving
 \emph{set-valued} functions on our category of worlds. The
 equivalence maps a pullback-preserving functor $A$ to the FM-set
 $X=X(A)$ defined by $X=\{(\w,a)\mid a\in A\w\}/\sim$ where $\sim$ is
 the equivalence relation generated by $(\w,a)\sim (\w_1,i.a)$ for
 each inclusion $i:\w\hookrightarrow \w_1$. Alternatively, we can
 present $X$ as the set of pairs $(\w,a)$ where $a\in A\w$ and
 $i\w_0\hookrightarrow \w$ and $a_0\in A\w_0$ and $a=i.a_0$ implies
 $\w_0=\w$, $a_0=a$. Pullback-preservation is needed to show that
 these two versions are equivalent and alternatively to show that
 various operations are well-defined on equivalence
 classes. Conversely, if $X$ is an FM-set we associate a
 pullback-preserving set-valued functor $A=A(X)$ by taking
 $A\w=\{x\mid\supp(x)\subseteq \w\}$. If $u:\w\rightarrow \w_1$ we pick
 a permutation $\sigma$ that agrees with $u$ on $\w$ and define
 $Au(x)=\sigma.x$. Notice that this definition is independent of the
 choice of $\sigma$.

 Shinwell and Pitts' FM-domains are defined similarly except that the
 underlying set is partially-ordered and has limits of finitely
 supported $\omega$-chains. The action is then required to be
 continuous.

Stark's dynamic allocation monad for pullback-preserving functors
induces the following monad on FM-sets: if $X$ is an FM-set define
FM-set $TX$ by $TX=\{(\w,x)\mid x\in X\}/\sim$ where $(\w,x)\sim
(\w',x')$ if $\w=\w'$ and $\pi.x=x'$ for some $\pi$ with
$\w\subseteq\Fix(\pi)$. The action is given on representatives by
$\pi.(\w,x)=(\pi(\w),\pi.x)$. Note that this is well-defined.

Shinwell's monad on FM-domains is defined similarly and it is shown
that it fails to have the required limits of chains. 
\subsection{FM-setoids}
In order to obtain a similar structure this time equivalent to pullback-preserving s.v.f., we can simply replace sets by setoids in the construction of the FM-sets. 

\begin{defi}
An \emph{FM-setoid} consists of on $\pi$ a setoid morphism $\pi.- : A\rightarrow A$ such that $\textrm{id}.= \sim\textrm{id}_A$ and $\pi.- \circ \sigma.- = (\pi\sigma).-$. 

For each $a\in A$ there must exist a finite set of names $S$ such that if $\sigma$ fixes $S$ then $\sigma.a=a$ for each $a\in A$. 
\end{defi}
Notice that we insist on actual equality in the last clause. 

If $A$ is an FM
\begin{thm}
\end{thm}